\documentclass[twoside]{article}

\usepackage{algorithm}
\usepackage{algorithmic}
\usepackage{amsfonts}
\usepackage{amsmath}
\usepackage{amsthm}
\usepackage{bbm}
\usepackage{graphicx}
\usepackage{xcolor}
\usepackage{hyperref}
\usepackage[a4paper]{geometry}
\newcommand{\E}{\mathbb E}
\newcommand{\N}{\mathbb N}
\renewcommand{\P}{\mathbb P}
\newcommand{\R}{\mathbb R}
\newcommand{\1}{\mathbbm 1}
\newcommand{\Var}{\operatorname{Var}}

\newtheorem{theorem}{Theorem}[section]
\newtheorem{lemma}[theorem]{Lemma}

\theoremstyle{definition}
\newtheorem{example}[theorem]{Example}
\newtheorem{remark}[theorem]{Remark}

\usepackage{amsmath}

\DeclareMathOperator*{\argmin}{arg\,min}
\newcommand{\figurewidth}{\columnwidth}

\begin{document}

\title{Federated Bayesian Computation via Piecewise Deterministic Markov Processes}

\author{Joris Bierkens\footnote{Delft University of Technology}, Andrew B. Duncan\footnote{Imperial College London, Alan Turing Institute}}

\maketitle

\begin{abstract}
  When performing Bayesian computations in practice, one is often faced with the challenge that the constituent model components and/or the data are only available in a distributed fashion, e.g. due to privacy concerns or sheer volume.   While various methods have been proposed for performing posterior inference in such federated settings, these either make very strong assumptions on the data and/or model or otherwise introduce significant bias when the local posteriors are combined to form an approximation of the target posterior. By leveraging recently developed methods for Markov Chain Monte Carlo (MCMC) based on Piecewise Deterministic Markov Processes (PDMPs), we develop a computation- and communication- efficient family of posterior inference algorithms (Fed-PDMC) which provides asymptotically exact approximations of the full posterior over a large class of Bayesian models, allowing heterogenous model and data contributions from each client.  We show that communication between clients and the server preserves the privacy of the individual data sources by establishing differential privacy guarantees.  We quantify the performance of Fed-PDMC over a class of illustrative analytical case-studies and demonstrate its efficacy on a number of synthetic examples along with realistic Bayesian computation benchmarks.
\end{abstract}

\section{Introduction}

In the problem of Federated Bayesian learning we are faced with the unique challenge that, either due to privacy or scalability, the model and its data are distributed across a federation of workers. In this setting, the model and/or data owned by the individual worker must not be disclosed to the other workers, and neither to any coordinator. While this problem has been studied in previous works, much of the proposed methodology has involved sacrificing the asymptotic exactness, which is characteristic of MCMC-based sampling algorithms, to facilitate the distribution and federation of the data, or alternatively being focused on a very narrow class of models. 

The problem of federated learning has largely been studied in the optimisation setting, where a global loss function which can be decomposed into local worker contributions must be optimised.   Classical strategies including Fed-SGD and Fed-Avg  \cite{mcmahan2017communication}, FedAc \cite{yuan2020federated}, and subsequent extensions, build on the general idea that each worker seeks to locally move towards the optimiser of the model based on their own local data contribution, computing a local gradient, and a central server aggregates these local gradient candidates in an appropriate fashion to obtain an estimate of the global minimiser. 

The validity of these approaches typically hinge on strong assumptions.  Firstly, it is assumed that the data across workers is independent and identically distributed, though a number of more recent federated learning methods seek to weaken these requirements, e.g. through knowledge distillation \cite{zhu2021data} or Bayesian non-parametric modelling, e.g. \cite{yurochkin2019bayesian}.  Secondly, it is assumed that workers use the same local model, though recent work on model personalization has suggested some strategies to address this \cite{mansour2020three}.

In this work, we address the problem of federated learning in a Bayesian setting, i.e. we seek to generate samples from a global posterior probability distribution obtained as a multiplicative composition of local posteriors distributed across the workers, and without sharing of model and/or data.   This is an inherently more challenging problem due to the fact that far more information pertaining to the local posterior distribution must be somehow communicated with other workers while ensuring privacy of data/model.    

Previous works have sought to lift methodology from the federated learning to the Bayesian setting, employing Stochastic Gradient Langevin Dynamics (SGLD)-based generalisations of Federated Learning counterparts, e.g. \cite{zhang2019cyclical, Elmekkaoui2021, vono2022qlsd}. Similarly, the Langevin-type algorithm proposed in \cite{sun2022federated}  combines distributed MCMC with compression techniques to reduce the  burden of communicating large gradients. 

Related approaches seek to employ approximations of the local posterior contributions which are used to communicate information to the central server.  Such approaches include distributed variational inference \cite{zhang2018advances, hasenclever2017distributed}, using Gaussian approximations \cite{al2020federated} and ensemble approaches \cite{linsner2021approaches}.   In \cite{bhattBayesianFederatedLearning2022}, local predictive posterior contributions are distilled and stored into a neural network which is communicated with the central server.    

Some works seek to reformulate the federated learning problem through the lens of Bayesian model averaging, where the local model contributions are combined into an accurate global approximation as a model ensemble \cite{chen2020fedbe, thorgeirssonProbabilisticPredictionsFederated2020}, building on other Bayesian uncertainty quantification methods used in deep learning such as \cite{maddox2019simple}.   Related to this are approaches which adopt a Bayesian hierarchical modelling view of Federated Learning, introducing hierarchical priors and fixed and random effects to share global information across the different federated workers, \cite{kotelevskii2022fedpop}.

All of the approaches discussed above either employ local posterior approximations to enable effective communication, and/or are contingent on very strict approximations on the structure of the model.  To our knowledge, there is no approach which can perform a full, (asymptotically) exact Bayesian inference in this context for a general Bayesian model.  In this paper we provide a federated (or distributed) approach to Markov Chain Monte Carlo with the following properties:
(i) the correct posterior distribution is retained;
(ii) the observational data may be distributed among workers with no requirement to exchange information other than the algorithmic output;
(iii) the observational data amongst different workers does not have to be identically distributed, nor do the local prior distributions have to be the same; 
(iv) the efficiency of the federated approach compares favourably to the non-federated approach in the sense that the algorithmic slowing down is compensated by the fact that computation is distributed among workers;
(v) the amount of information that is communicated between the workers and the server respects the privacy requirements of the worker, which can be quantified from a differential privacy viewpoint.

We will base our approach on the framework of Piecewise Deterministic Monte Carlo \cite{BouchardCoteVollmerDoucet2017,BierkensFearnheadRoberts2016}, which we will introduce in Section~\ref{sec:PDMC}. As discussed in Section~\ref{sec:federated-PDMC} this framework can be easily extended to allow for a federated (or distributed) approach while retaining the correct stationary distribution. The computational efficiency of our method is discussed in Section~\ref{sec:efficiency}. We will also consider our approach from the viewpoint of differential privacy in Section~\ref{sec:differential-privacy}. We provide numerical experiments for several examples to establish proof of concept and investigate efficiency properties in Section~\ref{sec:numerics}.

\section{Piecewise Deterministic Monte Carlo}
\label{sec:PDMC}
Here we concisely describe the framework of Piecewise Deterministic Monte Carlo (PDMC) in some generality.
Essentially, a PDMC sampler is based on a piecewise deterministic Markov process. This is a continuous time Markov process which moves along continuous deterministic trajectories, until at random times, a jump within the state space is made. In PDMC the state space consists of a position process $X(t) \in \R^d$ and a velocity process $V(t)$ taking values in a set $\mathcal V \subset \R^d$. The jumps (or \emph{events}) will only affect the velocity. The process will be designed to have a particular stationary probability distribution $\mu(dx,dv)$ with marginal position distribution $\pi(dx) = \int_{v \in \mathcal V} \mu(dx,dv)$. Here $\pi$ may be considered to be a Bayesian posterior distribution of interest.

\subsection{Deterministic dynamics}
In the general setting, the deterministic dynamics are described as the solution of an ordinary differential equation
\begin{equation} \label{eq:ode} \frac{d x(t)}{dt} = v(t), \quad \frac{d v(t)}{dt} = \psi(x(t)),\end{equation}
where $\psi : \R^d \rightarrow \R$ is a sufficiently regular function so that solutions to~\eqref{eq:ode} are defined uniquely, e.g., $\psi$ may be assumed to be globally Lipschitz. The deterministic dynamics are assumed to preserve a `reference' stationary measure $\mu_0(dx,dv) =  \pi_0( d x) \otimes \nu(dv)$, where $\pi_0(dx) = \exp(-U_0(x))\, dx$ for a suitable function $U_0:\R^d \rightarrow \R$ and $\nu$ is a probability measure on $\mathcal V$. This means that for a solution $\phi(t;x_0,v_0) := (x(t;x_0,v_0), v(t;x_0,v_0))$ to~\eqref{eq:ode} with initial condition $(x_0,v_0)$, we have for all integrable  $f : \R^d \times \mathcal V \rightarrow \R$ that
\begin{align*} & \int_{x_0 \in \R^d, v_0 \in \mathcal V} f(\phi(t;x_0,v_0)) \mu_0(dx_0, d v_0)\\ & = \int_{x \in \R^d, v \in \mathcal V} f(x,v) \, \mu_0(dx,dv).\end{align*} An interesting special case is when $\pi_0(dx) $ is chosen to be the prior distribution in a Bayesian inference problem, but this is not necessary.

\begin{example}[Zig-Zag Sampler]
For the Zig-Zag Sampler (ZZS, \cite{BierkensFearnheadRoberts2016}), we take $\psi(x) = 0$, $U_0(x) = 0$, $\mathcal V = \{-1,+1\}^d$ and the stationary velocity distribution is taken to be $\nu = \text{Uniform}(\mathcal V)$. We see that the velocities assume only discrete values which do not change under the deterministic dynamics.
\end{example}

\begin{example}[Bouncy Particle Sampler and Boomerang Sampler]
Let $\mathcal V = \R^d$ equipped with a Gaussian measure $\nu = \mathcal N(0,\Sigma)$. 
where $\Sigma$ is a positive definite matrix. The dynamics~\eqref{eq:ode} preserve $\mu_0$ by taking $\psi(x) = -\Sigma \nabla U_0(x)$.
In particular, for the  Bouncy Particle Sampler (BPS, \cite{BouchardCoteVollmerDoucet2017} we take $U_0(x) = 0$ and thus $\psi(x) = 0$, and usually $\Sigma = I_d$ so that $\mu_0(dx,dv)  = \text{Leb}(\R^d)(dx)\footnote{$\text{Leb}(\R^d)$ denotes Lebesgue measure on $\R^d$.} \otimes \mathcal N(0, I_d)(dv)$.  For the Boomerang Sampler \cite{Bierkens2020a}, we take $U_0(x) = \tfrac 1 2 x^T \Sigma^{-1}x$, so $\psi(x) = -x$ and have $\mu_0(dx,dv) = \mathcal N(0, \Sigma)(dx) \otimes \mathcal N(0, \Sigma)(dv)$.
In contrast to the ZZS, the BPS has a continuous space of possible velocities, but as for ZZS, the velocities do not change under the deterministic dynamics. For the Boomerang Sampler the deterministic dynamics correspond to a (skewed) harmonic oscillator.
\end{example}

\subsection{Jumps}
Next we specify the jumping mechanism which changes the velocity at random times. This is governed by a jump intensity $\lambda(x,v)$ and a Markov jump kernel  $Q(x,v, dv') : \R^d \times \mathcal V \times \mathcal B(\mathcal V)\footnote{$\mathcal B(\mathcal V)$ denotes the $\sigma$-field of Borel subsets of $\mathcal V$.} \rightarrow [0,1]$. More generally we may have multiple types of jumps with multiple types of rates $(\lambda_i)_{i=1}^k$ and jump distributions $(Q_i)_{i=1}^k$, competing for which event occurs first. Suppose we start from time $0$ at position $(x_0,v_0)$ and recall that we have deterministic dynamics $t \mapsto \phi(t;x_0,v_0)$. The distribution of the inter-event times $\tau_i$ are given by
\[ \P(\tau_i \ge t) = \exp \left( -\int_0^t \lambda_i(\phi(s;x_0,v_0)) \, d s \right).\]
The event that actually takes place is specified by setting $i_0 = \argmin_i \tau_i$. At time $\tau_{i_0}$ we make a transition according to the selected jump kernel $Q_{i_0}$, so that the distribution of the velocity after the jump is given by
$Q_{i_0}(\phi(\tau_{i_0};x_0,v_0), \cdot)$.
\\

\begin{remark}
It is always possible to write a combination of jump mechanisms $(\lambda_i, Q_i)_{i=1}^k$ as a single jump mechanism $(\lambda, Q)$ by defining
\begin{align*} 
\lambda(x,v) & = \sum_{i=1}^k \lambda_i(x,v), \\
Q(x,v,dv') & = \sum_{i=1}^k \frac{\lambda_i(x,v)}{\lambda(x,v)}\1_{\lambda(x,v) > 0} Q_i(x,v,dv').\end{align*}
This provides a convenient notational simplification which we will use whenever this does not cause confusion.
\end{remark}

Using the notation of the previous remark,  Algorithm~\ref{alg:PDMC} describes a general PDMC sampler.

\begin{algorithm}[ht!]
\renewcommand{\algorithmicrequire}{\textbf{Input:}}
\renewcommand{\algorithmicensure}{\textbf{Output:}}

\caption{Piecewise Deterministic Monte Carlo}
\begin{algorithmic}[1]
\REQUIRE Initial condition $(x, v) \in \R^d \times \mathcal V$.\\
\ENSURE The sequence of skeleton points $(T_k, X_k, V_k)_{k=0}^{\infty}$.
\STATE Set $(T_0, X_0, V_0)=(0, x,v)$.
\FOR{$k=0,1,2,\ldots$ (until a stopping criterion is met)} 
	\STATE{Simulate $\tau$ such that \begin{equation} \label{eq:switching-time}\P(\tau \geq t) = \exp \left( -\int_0^t \lambda(\phi(s;X_k, V_k)) \ d s \right)\end{equation}}
	\STATE{Set \begin{align*}T_{k+1} &= T_{k} + \tau,\\ 
							 (X_{k+1},\tilde V_{k+1}) &= \phi(\tau;X_k, V_k)\end{align*}}
    \STATE{Simulate $V(T_{k+1}) \sim Q(X_{k+1}, \tilde V_{k+1}, \cdot)$}
\ENDFOR
\end{algorithmic}
\label{alg:PDMC}
\end{algorithm}

In practice it may be challenging to simulate $\tau$ satisfying~\eqref{eq:switching-time}. We  discuss the usual approach of Poisson thinning in the Appendix.

\subsection{Stationary distribution}
\label{sec:stationary-distribution}
It is possible to formulate conditions on $\lambda_i$ and $Q_i$ in order to have a prespecified stationary distribution. For a function $f : \R^d \times \mathcal V \rightarrow \R$ we write
\[ Qf(x,v) = \int_{v'} Q(x,v, dv') f(x,v') \]
Suppose we wish to have the distribution $\exp(-U(x)) \pi_0(dx)$ as (marginal) stationary distribution.
In order to achieve this we impose the following conditions (understood to hold for all bounded measurable $f : \R^d \times \mathcal V \rightarrow \R$):
\begin{itemize}
    \item[(i)] Invariance of $\nu$ under the jump kernels: for all $x \in \R^d$,
    \begin{equation}\int_{v \in \mathcal V} Q_i  f(x,v) \, \nu(dv) = \int_{v \in \mathcal V} f(x,v) \, \nu(dv) \end{equation}
    \item[(ii)] Effective sign reversal under jumps:  for all $i = 1, \dots, k$ and $x \in \R^d$,
\begin{align} \label{eq:jump-condition}
\nonumber & \int_{v \in \mathcal V} \lambda_i(x,v) Q_i f(x,v) \, \nu(dv)\\
& = \int_{v \in \mathcal V} \lambda_i(x,-v) f(x,v) \, \nu(dv),
\end{align}
and
\item[(iii)] Event intensity condition: for all $(x,v) \in \R^d \times \mathcal V$,
\begin{equation}
\label{eq:switching-rate-condition}
\sum_{i=1}^k [\lambda_i(x,v) - \lambda_i(x,-v)] = \langle v, \nabla U(x) \rangle.
\end{equation}

\end{itemize}

Under the stated conditions it follows that the process with deterministic dynamics~\eqref{eq:ode}, and jumps according to $(\lambda_i, Q_i)_{i=1}^m$, has stationary distribution $\mu(dx,dv) \propto \exp(-U(x)) \, \pi_0(d x) \otimes \nu(dv)$.
The proof of this result depends on the notion of the Markov process generator and is beyond the scope of this work; see e.g. \cite{BierkensFearnheadRoberts2016,BouchardCoteVollmerDoucet2017,Bierkens2020a}

\begin{example}[Zig-Zag Sampler]
For Zig-Zag, we consider for $i = 1, \dots, d$,  $\lambda_i(x,v) = \max(v_i \partial_i U(x), 0) + \gamma_i(x)$ and  $Q_i f(x,v) = f(x,F_iv)$, where $\gamma_i$ is a non-negative function (called the \emph{excess switching rate} or \emph{refreshment rate}
\[ (F_i v)_j = \begin{cases} v_j \quad & j \ne i, \\
- v_j \quad & j = i. \end{cases}\]
This corresponds to flipping the $i$th direction of the velocity at a rate which depends on the $i$th partial derivative of $U$ as indicated.
\end{example}

\begin{example}[Bouncy Particle Sampler and Boomerang Sampler]
For BPS and Boomerang, recall that $\nu(dv) \propto \mathcal N(0, \Sigma)$ for a positive definite matrix $\Sigma$. We  take
\[ \lambda(x,v) = \max(\langle v, \nabla U(x) \rangle, 0)  \]
and $Q f(x,v) = f(x,R(x)v)$, where
\[ R(x)v = v - 2 \frac{\langle v,  \nabla U(x) \rangle}{|\Sigma^{1/2} \nabla U(x)|^2} \Sigma \nabla U(x).\]
This corresponds to a \emph{reflection} of the velocity in the direction of the gradient of $U$. In addition, we require a \emph{refreshment} jump at rate $\lambda_0(x)$, which independently draws a new velocity from the distribution $\nu$: without this refreshment the process will in general not be ergodic, i.e., it will not explore the full state space.
\end{example}

\subsection{The output of a PDMC algorithm}
\label{sec:PDMC-output}

In order to determine the full continuous time trajectory, it is sufficient to determine the positions and velocities $(X_k, V_k)$ immediately after jumps. These points are called the \emph{skeleton points}. The continuous time trajectory is obtained by the deterministic dynamics originating from the skeleton points, as
\[ (X(t),V(t)) = \phi(t - T_k; X_k, V_k), \quad T_k \le t < T_{k+1}.\] 

Provided the piecewise deterministic process is ergodic (as discussed in e.g. \cite{BouchardCoteVollmerDoucet2017, BiRoZi2019} for BPS and Zig-Zag respectively) we have the following approximation for our computation of interest: with probability one, 
\[ \int_{\R} h(x) \ \pi(x) \ d x = \lim_{t \rightarrow \infty} \frac 1 t \int_0^t h(X(s)) \ d s,\]
were $(X(t),V(t))$ is any random realization of the piecewise deterministic process with characteristics $(\phi, \lambda,Q)$ and arbitrary initial condition. Due to the piecewise linear nature of the trajectories of $(X(t))_{t \geq 0}$ it is often very straightforward to evaluate the one-dimensional integrals in this expression. Alternatively, one can obtain a discrete set of samples $(\widetilde X_k)_{k \in \N}$ by setting $(\widetilde X_k, \widetilde V_k) = (X(k \delta), V(k \delta))$, for some arbitrary $\delta > 0$. In this case the usual MCMC approximation formula 
\[ \int_{\R} h(x) \ \pi(x) \ d x = \lim_{K \rightarrow \infty} \frac 1 K \sum_{k=1}^K h(\widetilde X_k) \]
is satisfied with probability one, because $(\widetilde X_k, \widetilde V_k)$ can be seen as a discrete time ergodic Markov chain in $\R^d \times \mathcal V$ with marginal invariant density on $\R^d$ given by $\pi$.

\section{Federated Piecewise Deterministic Monte Carlo}
\label{sec:federated-PDMC}

Now consider the setting in which $\pi(dx)$ admits the factorization 
\[ \pi(dx) \propto \exp \left( -U_0(x) - \sum_{m=1}^M U_m(x) \right) \, dx.\]
We will distribute the simulation of $\pi$ over $M$ workers, where we assume that the function $U_0$ (with its gradient) is available to every worker, whereas for each $m = 1, \dots M$ the function $U_m$ (with its gradient) is only available to the $m$th worker.

\subsection{Federated computation of the first event}
The essential idea of Federated PDMC is that every worker $m \in \{1, \dots, M\}$ proposes a switching time associated to their own rate function $\lambda_m$. This proposed  switching time is communicated (along with the proposed change in velocity) to the coordinating server which selects the minimum of the proposed switching times and the proposed switch. From this time and the new combination of position and velocity, the process is repeated. Under simple conditions this approach can be seen to have the correct stationary distribution.

We suppose that every worker has its own jump mechanism consisting of jump intensity $\lambda_m(x,v)$ and jump kernel $Q_m(x,v,dv')$, satisfying~\eqref{eq:jump-condition} (replacing $i$ by $m$)
and $\lambda_m(x,v) - \lambda_m(x,-v) = \langle v, \nabla U_m(x) \rangle$.
Given initial condition $(x,v) \in \R^d \times \mathcal V$ every worker computes the first switching time $\tau_m$ according to the rate $\lambda_m(x,v)$, i.e.
\[ \P_{x,v}(\tau_m \geq t) = \exp \left( - \int_0^t \lambda_m(\phi(s;x, v)) \ d s \right).\]
(The deterministic dynamics $\phi$ are identical for all workers.)
Furthermore every worker computes a new choice of velocity $v_m$ according to their individual jump distribution $Q_m$, i.e. every worker simulates
\[ \tilde V^{(m)} \sim Q_m(\phi(\tau_m; x,v), \cdot).\]
Then every worker sends its proposal $(\tau_m, \tilde V^{(m)})$ to the server. The server determines the minimum switching time and associated new velocity.
Therefore the effective switching time for Federated PDMC is the first arrival time of an inhomogeneous Poisson process with rate $\lambda_{\text{fed}}(x,v) = \sum_{m=1}^M \lambda_m(x,v)$.
The associated effective jump kernel is
\[ Q_{\text{fed}}(x,v, dv') = \sum_{m=1}^M \frac{\lambda_m(x,v)}{\lambda_\text{fed}(x,v)} \1_{\lambda_{\text{fed}}(x,v) > 0} Q_m(x,v,dv').\]

The above procedure, described in detail in Algorithm~\ref{alg:federated-pdmc},  provides a genuinely federated algorithm since every machine only requires access to $U_m$, without affecting the invariant probability distribution. Indeed it is straightforward to verify that the conditions of Section~\ref{sec:stationary-distribution} are satisfied for $(\lambda_{\text{fed}}, Q_{\text{fed}})$. Moreover any ergodicity properties of the non-federated PDMC algorithm carry over to the Federated PDMC algorithm, since the effective event rate $\lambda_{\text{fed}}$ of Federated PDMC is increased relative to non-federated PDMC, as discussed in Section~\ref{sec:effective-event-rate}. Therefore the obtained skeleton points may be used as discussed in Section~\ref{sec:PDMC-output}.

\begin{algorithm}[ht!]

\renewcommand{\algorithmicrequire}{\textbf{Input:}}
\renewcommand{\algorithmicensure}{\textbf{Output:}}

\caption{Federated Piecewise Deterministic Monte Carlo}

\begin{algorithmic}[1]
\REQUIRE Initial condition $(x, v) \in \R^d \times \mathcal V$.\\
\ENSURE The sequence of \emph{skeleton points} $(T_k, X_k, V_k)_{k=0}^{\infty}$.
\STATE Set $(T_0, X_0, V_0)=(0, x, v)$.
\FOR{$k=0,1,2,\ldots$}
	\STATE{Locally: Every worker simulates $\tau_m$ and $\tilde V^{(m)}$ such that \begin{align*} \P(\tau_m \geq t) & = \exp \left( -\int_0^t \lambda_m(\phi(s;X_k,V_k)) \ d s \right)\\
 \tilde V^{(m)} & \sim Q_m(\phi(\tau_m;X_k,V_k),\cdot) \end{align*}}
	
	\STATE{Centrally: Set \begin{align*} m_0 & = \argmin \{ \tau_1, \dots, \tau_M \}, \\
	T_{k+1} &= T_{k} + \tau_{m_0},\\
    X_{k+1} &= \phi(\tau_{m_0}; X_k, V_k),\\
    V_{k+1} &= \tilde V^{(m_0)}.\end{align*}}
\ENDFOR
\end{algorithmic}

\label{alg:federated-pdmc}
\end{algorithm} 

\begin{remark}
Each worker only requires access to the component of the global variable which the function $U_m$ depends on.  It is straightforward to reformulate Algorithm \ref{alg:federated-pdmc} so that each worker only acts on a set of local variables, which are subsequently mapped onto the global state $X_n$ by the central server, and vice versa. 
\end{remark}

\begin{remark}
There is flexibility in how to accommodate for the prior distribution: it can be absorbed in the function $U_0$ or distributed amongst the workers through the functions $U_m$. Alternatively it could be handled by an artificial extra worker which simulates event times associated with the prior distribution. Also combinations of these approaches are possible.
\end{remark}

\section{Computational efficiency of Federated PDMC}
\label{sec:efficiency}

Consider a Bayesian context in which we have $N$ observations, distributed among $M$ workers, each worker having access to a batch of size $n_m$ points, with $N = \sum_{m=1}^M n_m$. In many relevant cases the computation of the proposed switching times $\tau_m$ will be a bottleneck factor for the total computational effort, and it is reasonable to expect that this effort is linear in the size of the data. Since the workers operate in parallel, we find that the computational effort required is of order $\max_{m=1,\dots,M} n_m$. As a special case, if all batch sizes are equal, i.e., $n_m = N/M$ for all $m$, the computational effort required per iteration is $N/M$. We see that the computational effort per proposed switch is reduced by a factor $M$ relative to the case in which all $N$ observations would be processed by a single worker. 

This simple computation does not yet paint the full picture: Although the invariant distribution is not affected by employing Federated PDMC, the  event rate is modified since the operation of taking the positive part $a \mapsto (a)_+$ occurs at each individual worker. We will consider the effect of this in some detail below.

\subsection{Expected event rate for the exponential family}
\label{sec:effective-event-rate}
Consider the `canonical rate' $\lambda_{\text{can}}$, corresponding to a single machine generating the switches associated with the potential function $U(x) = \sum_{m=1}^M U_m(x)$. In the context of the BPS and Boomerang Sampler (as an example),
\begin{align*} 
\lambda_{\text{fed}}(x,v) & = \sum_{m=1}^M \langle v, \nabla U_m(x) \rangle_+ \\ & \ge \left(  \sum_{m=1}^M\langle v, \nabla U_m(x) \rangle\right)_+ 
= \lambda_{\text{can}}(x,v).\end{align*}

We further investigate the expected switching rate for the federated one-dimensional Zig-Zag sampler for exponential family models indexed by a parameter $x$. We assume the data is generated from this model for a fixed (unknown) parameter value $x_0$.

For simplicity we assume a flat (Lebesgue) prior measure.

We show in the Appendix that in this situation the posterior expected event rate under the federated intensity and under the distribution of the data is  magnified by a factor $\sqrt{M}$, compared to the canonical rate:
\[ \E_{\mu} \lambda_{\text{fed}}(x,v) = \E_{\mu}\lambda_{\text{can}} (x,v) + \mathcal O(\sqrt{M N}),\]
while $\E_{\mu} \lambda_{\text{can}}(x,v) = \mathcal O(\sqrt{N})$.

\begin{remark}
\label{rem:diffusion}
The larger (expected) effective switching rate results in more simulated switches per unit time interval, so that a larger computational effort is required to simulated such an interval. There is another aspect which affects computational efficiency: As the switching rate increases beyond the canonical rate, the process trajectories become more diffusive (see \cite{BierkensDuncan2016}), resulting in an increased Monte Carlo error per simulated unit time interval. The quantification of this error is beyond the scope of this work.
\end{remark}

\begin{remark} For an exponential family the analysis breaks down into easily manageable parts but it may well be possible to generalize these results beyond this setting. Also the scaling dependence on dimension in multivariate settings is left for further research.
\end{remark}

\begin{remark} The increase of the event rate $\lambda_{\text{fed}}$ compared to $\lambda_{\text{can}}$ can be reduced by taking a control variates approach, in a similar approach as discussed in \cite{BierkensFearnheadRoberts2016}.
\end{remark}

\section{Differential privacy}
\label{sec:differential-privacy}

A relevant aspect of federated inference is the amount of privacy achieved by taking a federated approach. In Federated PDMC, every worker only computes its proposed switching time along with the corresponding proposed change of velocity. This is a very limited amount of information and as such at an intuitive level we may be optimistic about the privacy achieved by the Federated PDMC approach.

For a theoretical approach, we may employ the concept of differential privacy \cite{Dwork2006}. 
In differential privacy the privacy of a stochastic algorithm is quantified as follows. For $\varepsilon, \delta > 0$, an algorithm gives \emph{$\delta$-approximate $\varepsilon$-indistinguishability} or simpliy \emph{$(\varepsilon,\delta)$-differential privacy} if for outputs $\tau$ and $\tilde \tau$ corresponding to data sets differing at most one row,
\begin{equation} \label{eq:epsdelta-differential-privacy} \P(\tau \in S) \le \exp(\varepsilon) \P(\tilde \tau \in S) + \delta\end{equation}
for any measurable set $S$.

We will now investigate the differential privacy of Federated PDMC.
We make the assumption that 
\[ \pi(x) \propto \exp \left( - U_0(x) - \sum_{m=1}^M \sum_{i=1}^{n_m} U_i^{(m)}(x)\right),\]
where a single change of the $i$th observation in batch $m$ results in a change 
$U_i^{(m)}(x) \to \widetilde U_i^{(m)}$.

\subsection{Differential privacy for communication of switching times}

We suppress the dependence on $m$, so let $\lambda$ and $\widetilde \lambda$ denote the  switching intensities of a single machine with corresponding proposed switching times $\tau$ and $\tilde \tau$. We have the following result.

\begin{theorem}
\label{thm:epsdelta-differential-privacy-bounded-difference}
Suppose $\lambda(x,v) \ge \rho$  and $|\lambda(x,v) - \tilde \lambda(x,v)| \le K$ for for all $(x,v)$ and some constants $\rho > 0$ and $K > 1$. Then for any $S \subset [0,\infty)$ we have that~\eqref{eq:epsdelta-differential-privacy} holds,
where $\varepsilon > \log \left( 1 + \frac K \rho \right)$ and 
\[ \delta = \exp\left(- \frac{\rho}{K} \left[ \varepsilon - \log \left( 1 + \frac{K}{\rho} \right) \right]\right).\]
In particular, for $\varepsilon > 0$ and $\delta > 0$, if 
\begin{equation} \label{eq:condition-rho-differential-privacy} \rho \ge K \left( \frac{ 1 +\log (1/\delta)}{\varepsilon} \right),
\end{equation} we have that~\eqref{eq:epsdelta-differential-privacy} holds.
\end{theorem}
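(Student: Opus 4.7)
The plan is to exhibit a pointwise bound on the log-density ratio of the two first-arrival-time densities that holds on an initial time window $[0,t^\ast]$, and then pay the survival probability beyond $t^\ast$ as the additive $\delta$-slack.

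First I would note that, conditional on the starting point $(x_0,v_0)$ and along the deterministic flow $\phi(\,\cdot\,;x_0,v_0)$, the random times $\tau$ and $\tilde\tau$ are first arrival times of inhomogeneous Poisson processes. Writing $\lambda(s)=\lambda(\phi(s;x_0,v_0))$ and $\tilde\lambda(s)=\tilde\lambda(\phi(s;x_0,v_0))$ by slight abuse of notation, their Lebesgue densities on $[0,\infty)$ are
\begin{equation*}
f(t)=\lambda(t)\exp\!\left(-\int_0^t\lambda(s)\,ds\right),\qquad \tilde f(t)=\tilde\lambda(t)\exp\!\left(-\int_0^t\tilde\lambda(s)\,ds\right),
\end{equation*}
so the log-likelihood ratio splits cleanly as
\begin{equation*}
L(t)\;=\;\log\frac{f(t)}{\tilde f(t)}\;=\;\log\frac{\lambda(t)}{\tilde\lambda(t)}\;+\;\int_0^t\bigl[\tilde\lambda(s)-\lambda(s)\bigr]\,ds.
\end{equation*}

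Next I would bound each piece by the hypotheses. Since both $\lambda$ and $\tilde\lambda$ are at least $\rho$ (the privacy relation is symmetric in the two neighbouring datasets, so the lower bound applies to both rates) and differ by at most $K$, the instantaneous term is bounded by $|\log(\lambda(t)/\tilde\lambda(t))|\le\log(1+K/\rho)$, while the integral term is trivially bounded by $Kt$. Consequently $L(t)\le\log(1+K/\rho)+Kt$, and setting
\begin{equation*}
t^\ast\;:=\;\frac{1}{K}\Bigl(\varepsilon-\log\!\bigl(1+\tfrac{K}{\rho}\bigr)\Bigr),
\end{equation*}
which is strictly positive by the assumption $\varepsilon>\log(1+K/\rho)$, we have $L(t)\le\varepsilon$ for every $t\in[0,t^\ast]$.

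From here the differential privacy inequality follows by the standard split. For any measurable $S\subset[0,\infty)$, write $S=(S\cap[0,t^\ast])\cup(S\cap(t^\ast,\infty))$. On the first piece the pointwise bound $f(t)\le e^\varepsilon \tilde f(t)$ integrates to $\P(\tau\in S\cap[0,t^\ast])\le e^\varepsilon\P(\tilde\tau\in S\cap[0,t^\ast])\le e^\varepsilon\P(\tilde\tau\in S)$, whereas the second piece is controlled crudely by the survival probability
\begin{equation*}
\P(\tau>t^\ast)\;=\;\exp\!\left(-\int_0^{t^\ast}\lambda(s)\,ds\right)\;\le\;\exp(-\rho t^\ast)\;=\;\exp\!\left(-\tfrac{\rho}{K}\bigl[\varepsilon-\log(1+\tfrac{K}{\rho})\bigr]\right),
\end{equation*}
which is precisely the claimed $\delta$. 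The ``in particular'' statement then drops out by taking logarithms of $\delta$ and using $\log(1+K/\rho)\le K/\rho$ to conclude that $\rho\ge K(1+\log(1/\delta))/\varepsilon$ is sufficient.

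The only real subtlety I foresee is confirming the pointwise intensity-ratio bound in the regime $K\ge\rho$, where naively the lower bound $\lambda-K$ on $\tilde\lambda$ is vacuous: here I would lean on the symmetric reading of the neighbouring-datasets condition, so that $\tilde\lambda\ge\rho$ is inherited from the same structural bound that gives $\lambda\ge\rho$. Everything else is a routine log-ratio manipulation together with the one-sided tail bound on a Poisson first-arrival time.
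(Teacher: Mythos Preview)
Your proposal is correct and follows essentially the same route as the paper: bound the log-density ratio by $\log(1+K/\rho)+Kt$ on an initial window $[0,t^\ast]$, pay the survival probability $\exp(-\rho t^\ast)$ as the $\delta$-slack, and then deduce the ``in particular'' clause via $\log(1+K/\rho)\le K/\rho$. The paper merely packages these two steps into separate lemmas (a density-ratio lemma and an abstract good-set/bad-set lemma), but the content is identical; in fact your explicit flagging of the need for $\tilde\lambda\ge\rho$ is a point the paper's proof uses without comment.
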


The proof of Theorem~\ref{thm:epsdelta-differential-privacy-bounded-difference} is provided in the Appendix.

Theorem~\ref{thm:epsdelta-differential-privacy-bounded-difference} establishes that in order to achieve a certain level of differential privacy, we may tune the \emph{refreshment rate} of the algorithm to be at least $\rho$ as specified by~\eqref{eq:condition-rho-differential-privacy}. This can be achieved by setting the switching rate for each machine $m$ as
\[ \lambda_m^{\rho}(x,v) = \lambda_m(x,v) + \rho,\]
where $\lambda_m(x,v)$ is any valid switching intensity and with $\rho$ as desired. Indeed, the conditions of Section~\ref{sec:stationary-distribution} remain to be satisfied if a constant\footnote{or more generally, a function depending on $x$ only} is added to a valid event intensity.

The following example illustrates the condition $|\lambda(x,v) - \tilde \lambda(x,v)| \le K$  of Theorem~\ref{thm:epsdelta-differential-privacy-bounded-difference}.

\begin{example}[Logistic regression]

Consider a logistic regression example, with explanatory variables $(\xi_i^{(m)})$ and binary output variable $(\eta_i^{(m)})$.  Details for this setting may be found in the Appendix.
We consider the situation where all covariates $\xi_i^{(m)}$ belong to a bounded set, i.e.,
$\|\xi_i^{(m)}\| \le K$ for all $i, m$,
and use a piecewise linear sampler with fixed velocity magnitude $\|v\| = 1$.
Suppose we change a single observation $\eta_i^{(m)}$ to $\tilde \eta_i^{(m)} = 1 - \eta_i^{(m)}$. The associated switching intensity difference then satisfies
\[ |\tilde \lambda_m(x,v) - \lambda_m(x,v)| \le | \langle \xi_i^{(m)}, v \rangle| \le K.\]

\end{example}

\section{Experiments}
\label{sec:numerics}

In this section, we demonstrate how the the Federated Zig-Zag sampler could be used for a number of classical Bayesian computation benchmarks.  In each case, we study the influence of the number of distributed workers on both the computational efficiency as well as the speed of convergence of the PDMP to the target equilibrium distribution.    Further implementation details, including the derivation of dominating Poisson processes for computing the next switch time can be found in the Appendix.  An additional example demonstrating the Federated Zig-Zag sampler on a spatial log-Gaussian Cox model is provided in the Appendix.

\subsection{ Multivariate Gaussian Distribution}

In this scenario, we assume that $N=50$ independent, identically distributed observations $y_1, \ldots, y_N$, are made of a $d=10$ dimensional multivariate Normal distribution with unobserved $\mu$ and  covariance matrix of the form $\Sigma = \alpha^2 I_{d \times d}$, where $\alpha = 1$, for simplicity.  We choose this simple example to be able to empirically validate the results of Section \ref{sec:effective-event-rate}.  Assuming a uniform improper prior on $\mu$ for simplicity, the goal is to characterise the posterior distribution of $\mu | (y_1,\ldots, y_N)$, where it is assumed that the $N$ observations are distributed equally amongst $M$ workers.   We assume that $y$ is generated with true mean $\mu_0 = (0.5, \ldots, 0.5)$.   The posterior distribution is then given by $\mathcal{N}\left(\hat{y}, \Sigma/N\right)$, where $\hat{y} = \frac{1}{N}\sum_{i=1}^N y_i$.  

In Figure \ref{fig:perf_gaussian1} we plot the estimated mean switching rate for $20$ independent runs of the Federated Zig-Zag Sampler, as a function of the number of processors, and compare them against the theoretical prediction obtained in Section~\ref{sec:effective-event-rate}.  The solid curve demonstrates the theoretically-derived expected switching rate of the central server with leading order $\sigma_0\sqrt{NM}$, where $\sigma_0 = \alpha d$.  

We can clearly see a good agreement between experiment and theory.  To better understand the effect of the model federation on the convergence of the PDMP we estimate the effective sample size (ESS) of the process across multiple, independent runs.  In Figure \ref{fig:perf_gaussian_ess2} we plot the average ESS per gradient evaluation of the full potential.  In the fully sequential setting, we observe that the ESS per evaluation is decreasing as we add workers, due to the diffusion arising from increased switching induced by the Federated Zig-Zag algorithm, as discussed in Remark \ref{rem:diffusion}.  However, if we take into account the fact that the gradient evaluations are in fact parallelised (so that effort is reduced proportionally to the number of workers),  one observes a benefit in increasing the number of workers.

\begin{figure}
\begin{center}
 \includegraphics[width=\figurewidth]{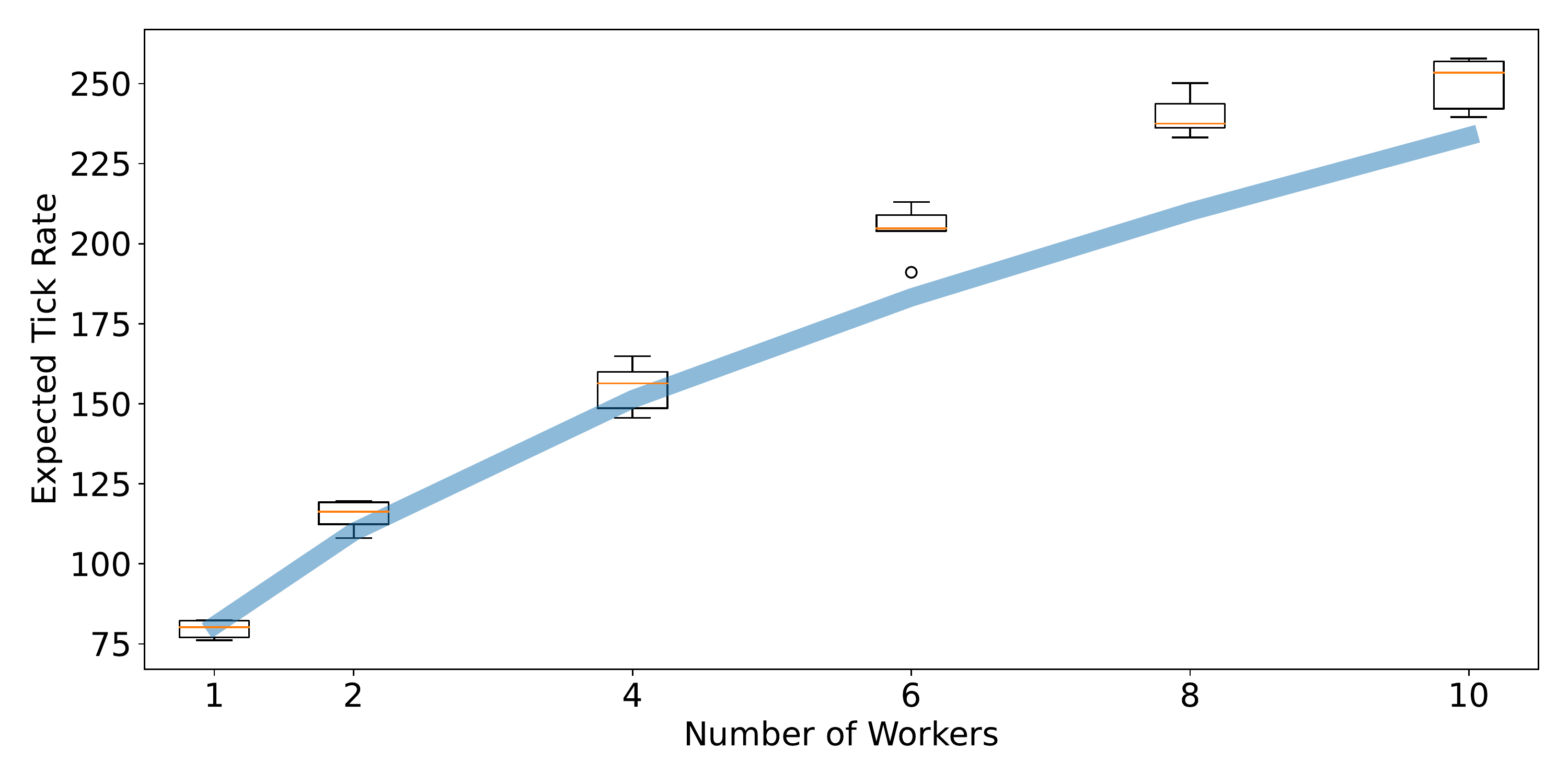}
 \caption{Effective Switching Rate for the Federated Zig-Zag Sampler for the Multivariate Gaussian Distribution compared to the theoretical expected switching rate.}
 \label{fig:perf_gaussian1}
\end{center}
\end{figure}

\begin{figure}
\begin{center}
 \includegraphics[width=\figurewidth]{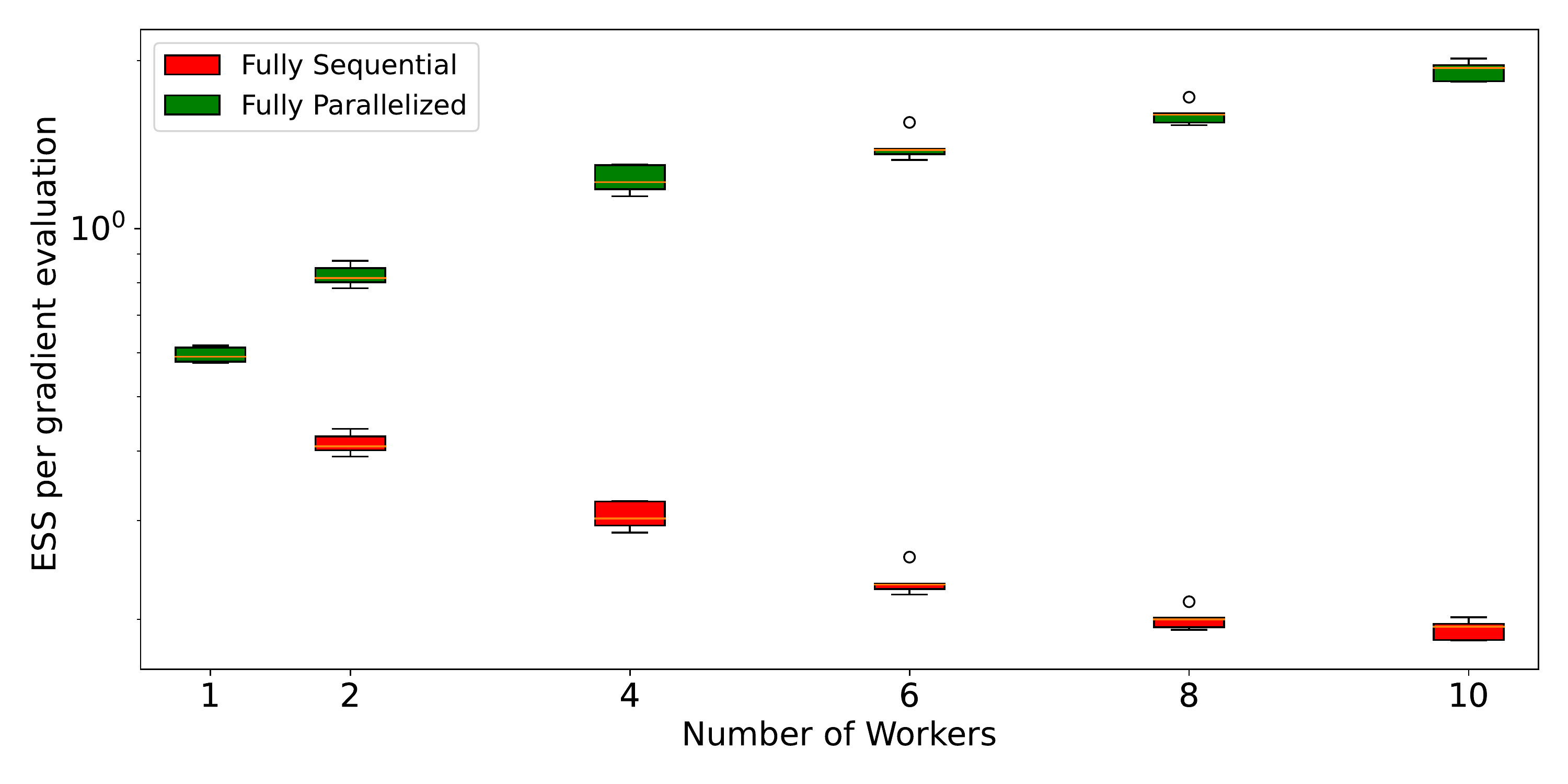}
 \caption{Effective Sample Size per  gradient evaluation of the full potential in the fully sequential and fully parallelized regimes. }
 \label{fig:perf_gaussian_ess2}
\end{center}
\end{figure}

\subsection{Logistic Regression}
We now consider a Bayesian logistic regression problem.  Given $N$ observations $\lbrace (\xi_1, \eta_1),\ldots, (\xi_N, \eta_N)\rbrace$, where $\xi_i \in \mathbb{R}^d$ and $\eta_i \in \lbrace 0, 1\rbrace$, we postulate that  $\eta_i \sim \mbox{Ber}(p_i)$,
such that $\mbox{logit}(p_i) = \xi_i^\top x,$ 
for an unknown 
$x \in \R^d$. The first component of each $\xi_i$ is taken to be equal to one to allow for an intercept in the model. The posterior distribution for $x$ given the observations is given by
$$
    \pi(x) \propto \pi_0(x)\prod_{i=1}^N \frac{\exp(\eta_i \xi_i^\top x)}{1 + \exp(\xi_i^\top x)} ,
$$
where $\pi_0$ is the prior on $x$ assumed to be standard normal, independent Gaussian distribution on all the components.   Implementation details of the implementation of the Zig-Zag sampler for this model can be found in the Appendix.   To demonstrate the Federated Zig-Zag method we generate $N = 1000$ synthetic observations with 
$d=6$, and distribute them over $M$ workers.  To show that the target posterior distribution is well approximated, we compared the output of the Federated Zig-Zag with a large MCMC sample for the same posterior, generated using Hamiltonian Monte Carlo (HMC), implemented in Blackjax \cite{blackjax2020github}.  In Figure \ref{fig:perf_lr_error} we plot the marginal 1-Wasserstein distances between $20$ independent runs of the Federated Zig-Zag scheme, run up to $T=100$, and a reference sample from the global posterior distribution.  It is clear that the overall error is small, and is not affected by the distribution of data across the workers.   

\begin{figure}
\begin{center}
 \includegraphics[width=\figurewidth]{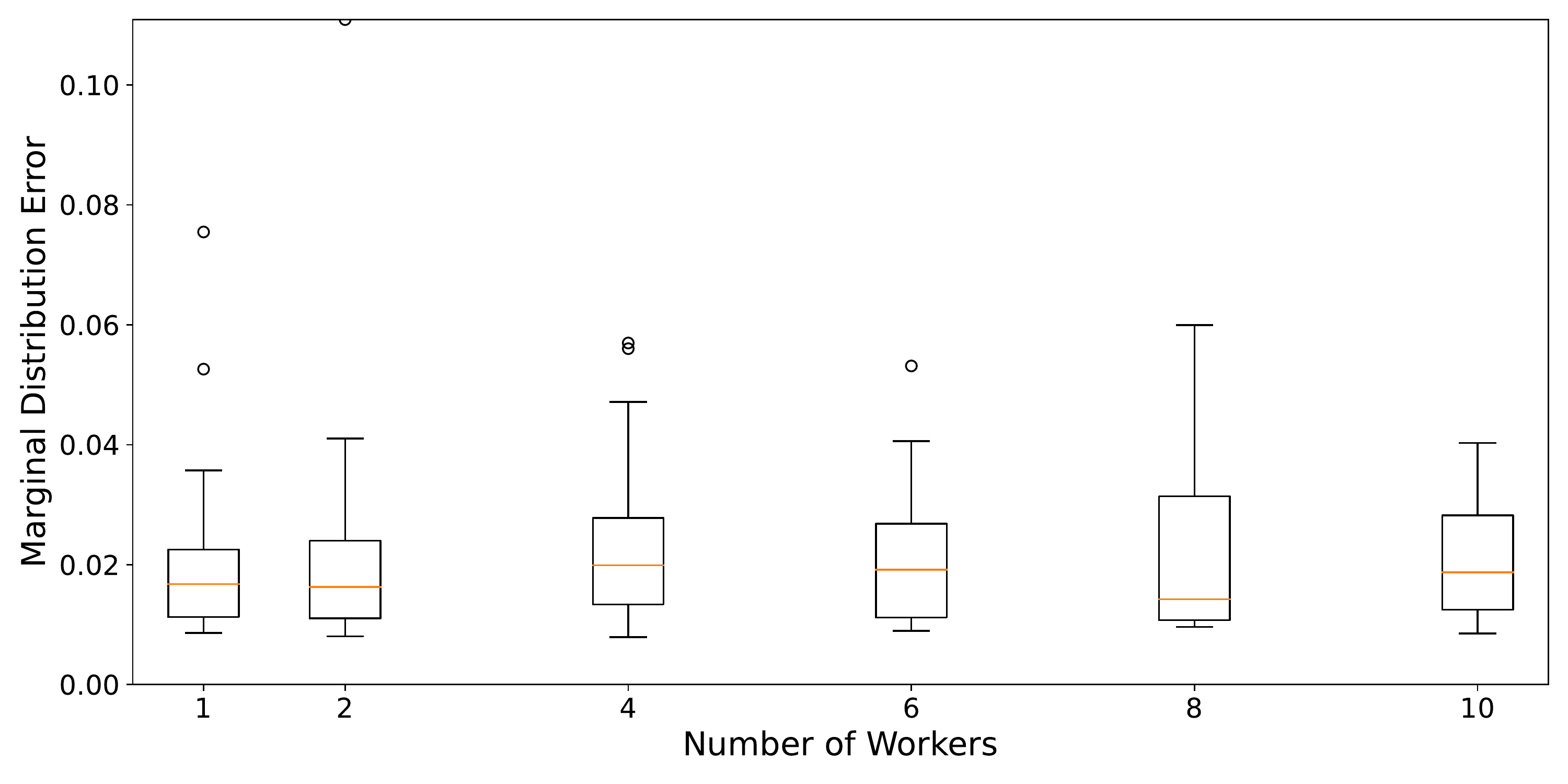}
 \caption{Marginal Wasserstein distance against a reference HMC sample for the logistic regression model.} 
 \label{fig:perf_lr_error}
\end{center}
\end{figure}

\subsection{Time Series Model}

Consider the AR(1) model in which a time series $(Y_i)$ satisfies the stochastic update rule $Y_i = c + x Y_{i-1} + \varepsilon_i$, for  $i = 1, 2, \dots$.
Here $c$ and $x$ are unknown constants, and $(\varepsilon_k)$ are i.i.d. random variables in $\R$ drawn from a distribution with density function $g$.   

We assume that we make $N$ independent observations of trajectories of the time-series, each observed at $K$ points, i.e. we observe  $y^{(1)}, \ldots, y^{(N)}$, where $y^{(i)} = (y^{(i)}_1, \ldots, y^{(i)}_K)$.   Suppose we have a joint prior density function $\pi_0(x, c, y_0)$ for $x$, $c$ and $Y_0$. For simplicity suppose that $\pi_0(x, c, y_0)$ is constant in $(x, c)$ (conditional on $y_0$). The posterior density function for $x$ and $c$ given the $N$ observed trajectories satisfies
\[ \pi(x, c) \propto \prod_{i=1}^N\prod_{k=1}^K g(y^{(i)}_k - c - x y^{(i)}_{k-1}).\]

We consider a robust inference setting, where we choose $g$ to be the density of a heavy-tailed distribution.   In this example, we choose $g$ to be a Student-T distribution with $\nu$ degrees of freedom.  Recall that the heaviness of the tails increases as $\nu \rightarrow 0$.   Mathematical details on the implementation of this model using the Zig-Zag sampler are provided in the Appendix.

To demonstrate the Federated Zig-Zag algorithm we suppose that $N$ observations are evenly distributed amongst the $M$ workers.   In Figure \ref{fig:perf_ts_error} we plot the marginal Wasserstein distance between $20$ independent runs of the Federated Zig Zag, output obtained after $T=100$ process time units of simulation, and a reference MCMC sample obtained using HMC.  We clearly see that the sampler is able to correctly approximate the correct posterior, and that this remains stable as the number of workers increases).

\begin{figure}
\begin{center}
 \includegraphics[width=\figurewidth]{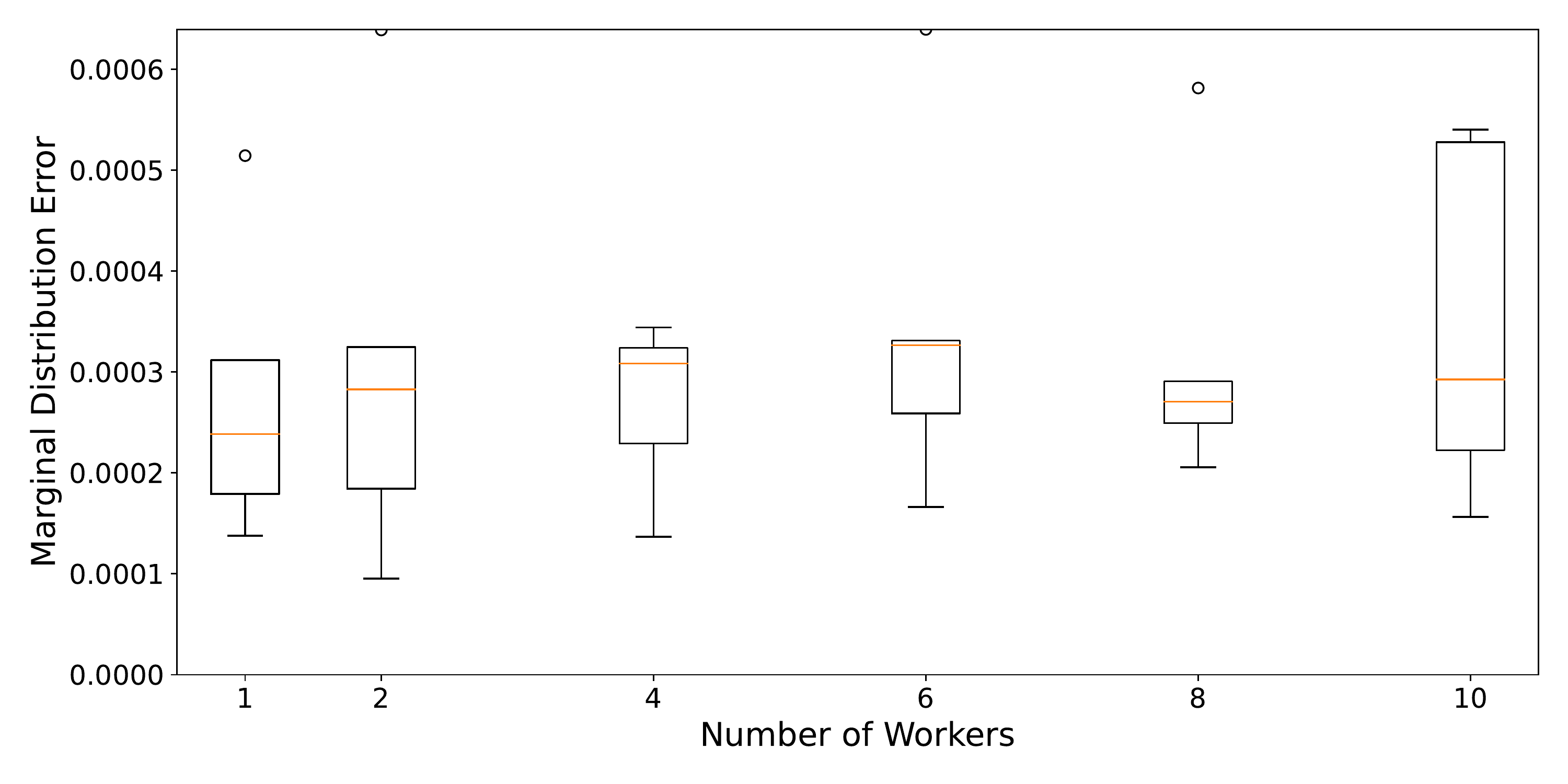}
 \caption{Marginal Wasserstein distance against a reference HMC sample for the time series model.} 

 \label{fig:perf_ts_error}
\end{center}
\end{figure}

\section{Discussion}

In this work we have introduced a generic method for Bayesian computation aimed at federated or distributed multiplicative compositions of local posterior distributions.

Our approach hinges strongly on the notion of Piecewise Deterministic Monte Carlo, a field in Bayesian computation that is currently very much under development. An intrinsic challenge of PDMC is the simulation of event times which in practice relies upon the use of a-priori bounds on the (local) gradients of the log posterior. Other numerical approaches to the computation of event times exist; see e.g. \cite{Corbella2022,Pagani2020}.

The analysis and numerics presented in this paper have focussed on a federated version of the Zig-Zag Sampler. Similar generalizations are possible for other PDMP based samplers with distinct advantages and disadvantages which we hope to study in future work.

Further research on the numerical efficiency of the distributed simulation of switching times in general multivariate settings for different PDMC algorithms is necessary. Such an understanding would be instrumental in having rules of thumb for designing optimal architectures for federated or parallel Bayesian inference, such as the optimizing the number of parallel workers. Efficiency gains may be achieved using control variates in similar spirit to \cite{BierkensFearnheadRoberts2016}.

\subsubsection*{Acknowledgements}

JB was supported by the research programme ‘Zigzagging through computational barriers’ with project number 016.Vidi.189.043, which is financed by the Dutch Research Council (NWO).
AD was supported by Wave 1 of The UKRI Strategic Priorities Fund under the EPSRC Grant EP/T001569/1 and EPSRC Grant
EP/W006022/1, particularly the ``Ecosystems of Digital Twins'' theme within those grants \& The Alan Turing Institute.

\bibliographystyle{alpha}
\bibliography{main.bib}

\appendix

\section{Event time simulation}
\label{sec:switching-time}

An important practical aspect of PDMP simulation is drawing the random times $\tau$ satisfying Equation (2) of the manuscript. We wish to simulate $\tau$ such that
\[ \P(\tau \ge t) = \exp \left( -\int_0^t \lambda(\phi(s;x,v)) \right)\]
where $(x,v)$ is any initial position of the trajectory.

To achieve this using the method of Poisson thinning, we assume that for every $(x,v)$ there is a bounding function $t \mapsto \overline{\lambda}(t;x,v)$ such that $\lambda(\phi(t;x,v))) \leq \overline{\lambda}(t; x,v)$ for all $t \geq 0$. We furthermore assume that the functions $\overline{\lambda}(\cdot,x,v)$ are chosen in such a way that there is an explicit formula for the inverse function 
\begin{equation} \label{eq:H} H(y;x,v) := \inf \left\{ t \geq 0 :  \int_0^t \overline{\lambda}(s;x,v) \ d s  \geq y \right\}.\end{equation}
Now if $V \sim \mathrm{Uniform}[0,1]$, then $\sigma := H(-\log V;x,v)$ satisfies 
\[ \P(\sigma \geq t) = \exp\left( - \int_0^t \overline{\lambda}(s;x,v) \ d s \right).\]
In words, $\sigma$ is distributed according to the first jump time of a inhomogeneous Poisson process with rate function $(\overline{\lambda}(t;x,v))_{t \geq 0}$.

In order to obtain a switching time with the desired distribution, we follow an iterative procedure.
We sample a proposed switching time $\tau$ satisfying $\P(\tau \geq t) = \exp(-\int_0^t \overline{\lambda}(s;x,v) \ d s)$, which we accept as true switching time with probability $\lambda(\phi(\tau;x,v))/\overline{\lambda}(\tau;x,v)$. If we do not accept the proposed switching time, we increase the time variable by $\tau$ and repeat with new starting point $\phi(\tau;x,v)$. The full procedure is given in Algorithm~\ref{alg:event-time-general}. 

\begin{algorithm}[ht!]
\renewcommand{\algorithmicrequire}{\textbf{Input:}}
\renewcommand{\algorithmicensure}{\textbf{Output:}}
\begin{algorithmic}[1]
\REQUIRE Current position $(x,v)$, switching rate function $t \mapsto \lambda(\phi(t;x,v))$ bounded from above by $\overline{\lambda}(\cdot;x,v)$ with associated function $H(\cdot;x,v)$ satisfying~\eqref{eq:H}.
\ENSURE Switching time $\tau$ such that $\mathbb P(\tau \geq t) = \exp \left( -\int_0^t \lambda(\phi(s;x,v))  \ d s\right)$.
\STATE Set $S$ = {\bf false}, $\tau = 0$.
\WHILE {{\bf not} $S$}
\STATE Simulate $V \sim \mathrm{Uniform}[0,1]$
\STATE Set $\sigma = H\left(-\log V; x, v \right)$
\STATE Set $S$ = {\bf true} with prob. $\lambda (\phi(\sigma;x,v)) / \overline{\lambda}(\sigma;x,v)$ 
\STATE Set $\tau = \tau + \sigma$ and $(x,v) = \phi(\sigma;x,v)$
\ENDWHILE
\end{algorithmic}
\caption{Event time simulation}
\label{alg:event-time-general}
\end{algorithm} 

In many settings we have that $\lambda(x,v) = \left( \langle v, \nabla U(x) \rangle\right)_+$, (or its one-dimensional variant, $\lambda_i(x,v) = (v_i \partial_i U(x))_+$), where $U$ has a bounded Hessian. If we also assume linear trajectories 
\[ \phi(t;x,v) = (x + vt, v),\]
as used in the  Zig-Zag Sampler and the Bouncy Particle Sampler, using Lipschitz continuity of $a \mapsto (a)_+$, we find using the mean value theorem that
\begin{align*} \lambda(\phi(t;x,v)) & = \lambda(x+vt, v) \le \lambda(x,v) + \sup_{x'} \| \nabla^2 U(x')\| \|v\|^2 t, \\
\lambda_i(\phi(t;x,v)) & = \lambda_i(x+vt, v) \le \lambda_i(x,v) + \sup_{x'} \| \nabla^2 U(x')\|_p \|v\|_p |v_i|  t,\end{align*}
where $\|\cdot\|$ denotes the Euclidean norm, and we can use any vector norm  $\|\cdot \|_p$, and associated induced matrix norm $\|\cdot\|_p$, for $ p \in [1,\infty]$, in the estimate for $\lambda_i$.

In the context of Federated PDMC as discussed in Section 3 the procedure outlined in Algorithm~\ref{alg:event-time-general} can be used for the simulation of the switching times $\tau_m$ for each machine: in this case we just replace the global switching rate $\lambda$ by the switching rates $\lambda_m$ of the individual machines and make sure we find a suitable upper bound $\overline{\lambda}_m$ with accompanying inverse $H_m$.

\section{Expected switching rate for exponential families}

In this section we consider the expected switching rate for the federated one-dimensional Zig-Zag sampler for data $(y_i)$  generated from an exponential family model with parameter $x$,
\begin{equation} \label{eq:exponential-family} f(y; x) = h(y) \exp \left( c(x) + x t(y) \right).\end{equation} 
We consider the situation where the data $(y_1, \dots, y_N)$ is generated from~\eqref{eq:exponential-family} for a fixed `true' parameter $x_0$.
We assume that the data $(y_1,\dots,y_N)$ is partitioned into $M$ batches $m =1,\dots, M$, where the $m$th batch consists of $n_m$ elements, denoted by $y_{m,i}$, $i = 1,\dots, n_m$.
For simplicity we assume a flat (Lebesgue) prior measure. 

Observe that
$U'(x) = \sum_{m=1}^M  U_m'(x)$
where 
\begin{align*} U_m'(x) & =  - n_m  c'(x) - \sum_{i=1}^{n_m}   t(y_{m,i}) = \frac{n_m}{N} U'(x) + Z_m
\end{align*}
having  defined
\[ Z_m := \frac {n_m}{N} \sum_{i=1}^N  t(y_i)- \sum_{i=1}^{n_m} t(y_{m,i}),\]
and using that
\[ U'(x) = \sum_{m=1}^M \sum_{i=1}^{n_m} \left[ - c'(x) -t(y_{m,i}) \right] = - N c'(x) - \sum_{i=1}^{N} t(y_i).\]

We consider the Zig-Zag Sampler in $\R$. Write $\mu(dx,dv) \propto \exp(-U(x)) \, dx \otimes \text{Uniform}(\{-1,+1\})$.
For the federated learning switching intensity we estimate, using the 1-Lipschitz property of $a \mapsto (a)_+$,
\begin{align} \nonumber \E_{\mu} \lambda_{\text{fed}}(x,v) & = \E_{\mu} \sum_{m=1}^M (v U_m'(x))_+  \le \E_{\mu} \sum_{m=1}^M \frac{n_m}{N} ( vU'(x) )_+ + \E_{\pi} \sum_{m=1}^M \left| Z_m \right|  \\
\label{eq:federated-rate-breakdown} & =  \E_{\mu} \lambda_{\text{can}}(x,v) + \sum_{m=1}^M  \left|Z_m \right|.
\end{align}
We analyze the two terms separately. 

First we consider the posterior expectation of the canonical switching rate, $ \E_{\mu} \lambda_{\text{can}}(x,v)$.
We have $\E_{\pi} U'(x) = 0$. Assuming posterior contraction (see the Bernstein-von Mises theorem, \cite[Section 10.2]{Vaart1998}), we have asymptotically that 
\begin{equation}\label{eq:asymptotic-contraction} \pi \stackrel{N \rightarrow \infty}{\sim} \mathcal N\left(\hat x, \frac 1 {N I(x_0)}\right), \end{equation}
where $\hat x = \hat x(y_1, \dots, y_N)$ denotes the maximum likelihood estimator for $x$ and $I(x)$ denotes the Fisher information associated with the parametric model $f(\cdot; x)$. (The expression~\eqref{eq:asymptotic-contraction} should be interpreted in an appropriate asymptotic sense as in \cite{Vaart1998}.) By the Delta method, asymptotically,
\[ \Var_{\pi} U'(x) = N^2 \Var_{\pi} c(x) = N^2  \left( \frac 1 N \frac{ \left( c'(\hat x) \right)^2}{I(x_0)} \right),  \]
so that 
\[  \E_{\mu} \lambda_{\text{can}} \le  \E_{\pi} |U'(x)| \le \left(  \E_{\pi} |U'(x)|^2 \right)^{1/2} = \sqrt{N} |c'(\hat x)| / \sqrt{I(x_0)}.\]

Next we consider the effective excess switching rate, bounded by $\sum_{m=1}^M |Z_m|$. We have
\begin{align*}
Z_m = \frac{n_m}{N}\sum_{i=1}^N t(y_i) - \sum_{i=1}^{n_m} t(y_{m,i}) &= \frac {n_m}{N} \sum_{\substack{m'=1 \\ m' \ne m}}^M \sum_{i=1}^{n_{m'}} t(y_{m',i}) + \sum_{i=1}^{n_m} \left(\frac{n_m}{N}- 1 \right) t(y_{m,i}).
\end{align*}
Recall that $y_i$ are i.i.d. according to $f(\cdot;x_0)$ for a fixed parameter $x_0$,
and denote $\sigma_0^2$ for the variance of $t(y_i)$. We see that the mean of $Z_m$ is zero, and its variance is 
\[ \sigma_0^2 \left( \frac {n_m^2} {N^2} \left( N - n_m \right)  + \left( \frac {n_m}{N} - 1\right)^2 n_m \right) = \frac {\sigma_0^2 n_m(N-n_m)}{N}. \]
Therefore we have
\begin{align*}
\E_y \sum_{m=1}^M \left| Z_m \right|
& \le \sum_{m=1}^M \sqrt{\Var_y {Z_m}} =  \sigma_0 \sum_{m=1}^M \sqrt{\frac{n_m (N-n_m)}{N} } \le \sigma_0 \sum_{m=1}^M \sqrt{n_m}. 
\end{align*}
Using Jensen's inequality,
\begin{align*}
\sigma_0 \sum_{m=1}^M \sqrt{n_m} = \sigma_0 \sum_{m=1}^M \frac{n_m}{N} \frac{N}{n_m} \sqrt{n_m} \le \sigma_0 \left( \sum_{m=1}^M \frac{n_m}{N} \frac{N^2}{n_m} \right)^{1/2}  = \sigma_0 \sqrt{M N}.
\end{align*}

Combining the terms in~\eqref{eq:federated-rate-breakdown} we find that
\[ \E_{\mu} \lambda_{\text{fed}}(x,v) = \mathcal O( \sqrt{N})+ \mathcal O(\sqrt{N M } )  = \mathcal O(\sqrt{NM}).\]

According to this analysis the switching rate of the Federated PDMC Sampler (Algorithm 2 in the manuscript) is increased by a term of $\mathcal O(\sqrt{MN })$ relative to the canonical rate of the standard PDMC Sampler (Algorithm 1 in the manuscript), which is $\mathcal O(\sqrt{N})$.

\section{Differential privacy}

For simplicity we write the proof of Theorem 5.1 in the manuscript in terms of time dependent switching rates $\lambda(t)$. These may be interpreted as $\lambda(t) = \lambda(\phi(t;x,v))$ in the context of Federated PDMC.

\begin{lemma}
\label{lem:differential-privacy-bounded-difference}
Suppose $|\lambda(t) - \tilde \lambda(t)| \le K$ and $\frac 1 {\gamma} \le \lambda(t)/\tilde \lambda(t) \le \gamma$ for $t \ge 0$ and some constants $K > 0$ and $\gamma > 1$. Then
\[ \frac{\exp \left( -\int_0^t \lambda(s) \right) \lambda(t)}{\exp \left( -\int_0^t \tilde \lambda(s) \right) \tilde \lambda(t)} \le \exp \varepsilon \]
if $\log \gamma < \varepsilon$ and $ t \le \frac{\varepsilon - \log \gamma}{K}$.
\end{lemma}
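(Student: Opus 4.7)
My plan is to take logarithms of the ratio and split it into two contributions that can be bounded separately using the two hypotheses. Writing
\[ \log \frac{\exp \left( -\int_0^t \lambda(s)\,ds \right) \lambda(t)}{\exp \left( -\int_0^t \tilde \lambda(s)\,ds \right) \tilde \lambda(t)} = \int_0^t \bigl(\tilde\lambda(s) - \lambda(s)\bigr)\,ds + \log \frac{\lambda(t)}{\tilde\lambda(t)}, \]
the claim reduces to bounding each term by a suitable piece of $\varepsilon$.

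For the integral term I would use the bound $|\lambda(s) - \tilde\lambda(s)| \le K$ pointwise, which immediately gives
\[ \int_0^t \bigl(\tilde\lambda(s) - \lambda(s)\bigr)\,ds \le \int_0^t \bigl|\lambda(s) - \tilde\lambda(s)\bigr|\,ds \le Kt. \]
For the second term I would use the upper ratio bound $\lambda(t)/\tilde\lambda(t) \le \gamma$, which yields $\log(\lambda(t)/\tilde\lambda(t)) \le \log\gamma$. (The lower bound $\lambda(t)/\tilde\lambda(t) \ge 1/\gamma$ is not needed for this direction of the inequality but ensures the log is well-defined and finite when one rate is small.) Adding the two estimates gives an overall upper bound of $Kt + \log\gamma$.

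The conclusion then follows by substituting the assumed bound $t \le (\varepsilon - \log\gamma)/K$, since this forces $Kt + \log\gamma \le \varepsilon$. The condition $\log\gamma < \varepsilon$ is needed only to ensure that the allowed range of $t$ is non-empty. There is essentially no obstacle here; the whole argument is a one-line application of the triangle inequality inside the integral together with monotonicity of the logarithm, and the constants have been arranged so that the two pieces add up to exactly $\varepsilon$ at the boundary $t = (\varepsilon-\log\gamma)/K$.
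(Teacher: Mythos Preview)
Your proof is correct and essentially identical to the paper's own argument: the paper bounds the ratio directly by $\exp\bigl(\int_0^t |\lambda(s)-\tilde\lambda(s)|\,ds\bigr)\,\lambda(t)/\tilde\lambda(t) \le \gamma\exp(Kt)$ and then imposes $t \le (\varepsilon-\log\gamma)/K$, which is the same computation you carry out after taking logarithms.
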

\begin{proof}
We have 
\begin{align*}
    \frac{\exp \left( -\int_0^t \lambda(s) \right) \lambda(t)}{\exp \left( -\int_0^t \tilde \lambda(s) \right) \tilde \lambda(t)} & \le \exp \left( \int_0^t  |\lambda(s) - \tilde \lambda(s) | \, d s \right) \frac{\lambda(t)}{\tilde \lambda(t)} \le \gamma \exp (K t).
\end{align*}
The stated result follows immediately.
\end{proof}

\begin{proof}
We have
\begin{align*}
    \frac{\exp \left( -\int_0^t \lambda(s) \right) \lambda(t)}{\exp \left( -\int_0^t \tilde \lambda(s) \right) \tilde \lambda(t)} & = \exp \left( \int_0^t \lambda(s) \left[ \tilde \lambda(s)/\lambda(s) - 1 \right] \,  d s \right) \lambda(t)/\tilde \lambda(t) \\
    & \le \gamma \exp \left( \int_0^t (\alpha + \beta s) (\gamma - 1) \, d s\right)  =  \gamma \exp \left( (\gamma- 1) ( \alpha t + \tfrac 1 2 \beta t^2 ) \right),
\end{align*}
from which it is straightforward to obtain the stated result.
\end{proof}

We see that under reasonable conditions we have $\varepsilon$-differential privacy, provided that we restrict the maximal switching time to a finite time interval. As an extension, we may obtain an upper bound on the probability to have a switching time larger than this time, in order to obtain $(\varepsilon, \delta)$-differential privacy.

\begin{lemma}
\label{lem:epsdelta-diffprivacy-abstract}
Let $(E, \mu)$ be a measure space. Suppose $f$ and $\tilde f$ are probability densities relative to $\mu$. Suppose $G \subset E$ is such that $\int_{G^c} f \, d \mu \le \delta$ for some $\delta \ge 0$, and $f/\tilde f \le \exp(\varepsilon)$ on $G$. Then
\[ \int_S f \, d \mu \le \exp(\varepsilon) \int_S \tilde f\, d \mu + \delta.\]
\end{lemma}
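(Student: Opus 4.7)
The plan is to carry out the standard ``good set / bad set'' decomposition that is the work-horse argument behind $(\varepsilon,\delta)$-differential privacy. The whole statement is really just bookkeeping once one splits the integration domain $S$ into the piece that lies inside $G$, where the pointwise likelihood-ratio bound $f/\tilde f \le \exp(\varepsilon)$ applies, and the residual piece in $G^c$, which is controlled in total $f$-mass by the hypothesis $\int_{G^c} f \, d\mu \le \delta$.

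Concretely, I would first write
\[ \int_S f \, d\mu = \int_{S \cap G} f \, d\mu + \int_{S \cap G^c} f \, d\mu. \]
For the first term, I use the pointwise inequality $f \le \exp(\varepsilon)\, \tilde f$ valid on $G$ (and hence on $S \cap G$), and then drop the intersection with $G$ using nonnegativity of $\tilde f$:
\[ \int_{S \cap G} f \, d\mu \le \exp(\varepsilon) \int_{S \cap G} \tilde f \, d\mu \le \exp(\varepsilon) \int_{S} \tilde f \, d\mu. \]
For the second term, I use monotonicity and the standing assumption:
\[ \int_{S \cap G^c} f \, d\mu \le \int_{G^c} f \, d\mu \le \delta. \]
Summing the two bounds yields the claimed inequality
\[ \int_S f \, d\mu \le \exp(\varepsilon) \int_S \tilde f \, d\mu + \delta. \]

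There is no real obstacle here: the only implicit point to verify is that $S \cap G$ and $S \cap G^c$ are measurable, which follows from measurability of $S$ and $G$ (we may take the densities $f, \tilde f$ to be measurable representatives, and $G = \{f \le \exp(\varepsilon) \tilde f\}$ is automatically measurable as the preimage of a Borel set under a measurable function). So the proof is essentially two lines of inequalities after the decomposition, and the lemma is a clean abstraction of the tail-bound step that will feed back into the proof of Theorem~\ref{thm:epsdelta-differential-privacy-bounded-difference}.
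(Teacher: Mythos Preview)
Your argument is correct and follows exactly the same decomposition as the paper: split $S$ into $S\cap G$ and $S\cap G^c$, apply the pointwise ratio bound on the first piece, and bound the second piece by $\int_{G^c} f\,d\mu \le \delta$. The paper's proof is the one-line version of what you wrote.
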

\begin{proof}
This follows since
\[ \int_S f \, d \mu = \int_{S \cap G} f \, d \mu + \int_{S \cap G^c} f \, d \mu \le \exp(\varepsilon) \int_{S \cap G} \tilde f\, d \mu + \delta. \]
\end{proof}

We are now ready to provide the proof of Theorem 5.1.

\begin{proof}
Using Lemma~\ref{lem:epsdelta-diffprivacy-abstract} it suffices to show that, for some $t_0 > 0$, 
\[ \P(\tau \ge t_0) \le \delta \quad \text{and} \quad f_{\tau}(t)/ f_{\tilde \tau}(t) \le \exp(\varepsilon), \quad 0 \le t \le t_0,\]
where $f_{\tau}$ denotes the density function of $\tau$.
Let $t_0 = \frac{\varepsilon - \log \gamma}{K}$ with $\gamma = \frac K \rho + 1$.
Indeed we have
\[ \P(\tau \ge t_0) \le \exp(-t_0 \rho) = \delta.\]
Furthermore we have that 
\[ \lambda(t) / \tilde \lambda(t) \le \frac{\lambda(t) - \tilde \lambda(t)}{\tilde \lambda(t)} + 1 \le 1 + \frac{K}{\rho} = \gamma\]
and using Lemma~\ref{lem:differential-privacy-bounded-difference} we find that the ratio of densities $f_{\tau}/f_{\tilde \tau}$ is bounded by $\exp(\varepsilon)$ for $t \le t_0$.

Finally, for fixed $\varepsilon > 0$ and $\delta > 0$, if we take $\rho$ satisfying the indicated inequality, then
we must verify that $\varepsilon > \log \left( 1 + \frac K {\rho} \right)$ and that
\[ \delta' := \exp \left( - \frac{\rho}{K} \left[ \varepsilon - \log \left( 1 + \frac{K}{\rho} \right) \right]\right) \le \delta.\]
Indeed,
\[ \log \left( 1 + \frac K {\rho} \right) \le \left( 1 + \frac{\varepsilon}{1 + \log (1/\delta)} \right) < \log (1 + \varepsilon) < \varepsilon,\]
and
\[ \log \delta' = \frac{\rho}{K} \left[ \log \left( 1 + \frac{K}{\rho} \right) - \varepsilon \right] \le \frac{\rho}{K} \left( \frac{K}{\rho} - \varepsilon \right) \le \log \delta. \]
\end{proof}

\section{Increased Privacy via Dynamic Prior Switching}
Suppose the communication between the server and one the nodes is compromised by an attacker.  Then over a long period of time, it would be theoretically possible for the attacker to reconstruct the global potential function by observing the switches over long periods of time.   To mitigate this we propose a heuristic strategy which obfuscates the individual worker contributions to the likelihood by weighting them with a random piece of the prior, which evolves dynamically.   To be more specific, suppose we are targeting the following global posterior:
$$
    \pi(x) \propto \exp\left(- \sum_{m=1}^M U_m(x)\right)\exp(-U_0(x)),
$$
where $U_m$ is the potential contribution for the $m^{th}$ worker, and $\pi_0 \propto e^{-U_0}$ is a prior.   The proposed strategy is to distribute the prior across the workers so that the local potential for the $m^{th}$ worker becomes 
$$
    \widetilde{U}_m(x) = U_m(x) + \alpha_m U_0(x),
$$
where the weights $\alpha_1, \ldots, \alpha_M$ are randomly chosen scalars such that $\sum_{m=1}^M \alpha_m = 1$.   Clearly, $\exp(-\sum_m \widetilde{U}_m(x)) \propto \pi(x)$.    To dynamically change the weights, we introduce a constant redistribution rate $\lambda_{\text{redist}}$ and assume that prior re-distributions (i.e. resampling of the $\alpha$'s) occurs at discrete times determined by an independent Poisson process with constant rate $\lambda_{\text{redist}}$.   The new algorithm is expressed in Algorithm \ref{alg:redist}.  Note that the worker routine remains unchanged from the standard Federated Zig-Zag algorithm.  

\begin{algorithm}[ht!]
\renewcommand{\algorithmicrequire}{\textbf{Input:}}
\renewcommand{\algorithmicensure}{\textbf{Output:}}

\begin{algorithmic}[1]
\REQUIRE Initial condition $(x, v) \in \R^d \times \mathcal V$, redistribution rate $\lambda_{\text{redist}} > 0$\\
\ENSURE The sequence of \emph{skeleton points} $(T_k, X_k, V_k)_{k=0}^{\infty}$.
\STATE Set $(T_0, X_0, V_0)=(0, x, v)$.
\FOR{$k=0,1,2,\ldots$}
	\STATE{\emph{Locally}: Every worker simulates $\tau_m$ and $\tilde V^{(m)}$ such that $$ \P(\tau_m \geq t)  = \exp \left( -\int_0^t \lambda_m(\phi(s;X_k,V_k)) \ d s \right),
    $$
    and $\tilde V^{(m)}  \sim Q_m(\phi(\tau_m;X_k,V_k),\cdot)$,  where $$\lambda_m(x,v) = \max(0, U_m(x) + \alpha_m U_0(x)).$$}
		\STATE{\emph{Centrally}: \\ 
        Simulate $\tau_{\text{redist}}$ such that 
        $$
            \mathbb{P}(\tau_{\text{redist}} \geq t)  = \exp\left(-\lambda_{\text{redist}} t\right).
        $$
    Set  
        \begin{align*} m_0 &= \argmin \{ \tau_1, \dots, \tau_M \}, \\
                       \tau_{min} &= \min(\tau_{m_0}, \tau_{\text{redist}})\\
	               T_{k+1} &= T_{k} + \tau_{min},\\
                    X_{k+1} &= \phi(\tau_{min}; X_k, V_k),
        \end{align*}}
          \IF{$\tau_{\text{redist}} < \tau_{m_0}$}
            \STATE{Resample $\alpha_1, \ldots, \alpha_M$, such that $\sum_{m=1}^M \alpha_m = 1$.}
         \ELSE
         \STATE{Set $$
            V_{k+1} = \tilde V^{(m_0)}$$}
        \ENDIF
\ENDFOR
\end{algorithmic}

\caption{Federated ZigZag with Prior Re-distribution}
\label{alg:redist}
\end{algorithm}

The dynamic re-distribution of the prior introduces time-inhomogeniety into the process, and it does not formally follow from previous results that the resulting process has the correct unique stationary distribution, though it is intuitively clear that would be the case.   We leave the analysis of this new PDMP variant as a subject for future work.   To demonstrate that the redistribution process does not  affect convergence to equilibrium, we repeat the experiment from the Logistic Regression example from Section 6.2 of the main paper, under the same conditions, with $\lambda_{\text{redist}}=0.1$.   We plot the marginal Wasserstein distances against a HMC reference sample in Figure \ref{fig:perf_lr_error_dist}.  Compared to the results for the Federated Zig-Zag in Figure 3 of the main text we observe that the introduction of the prior re-distribution process does not adversely affect the convergence behaviour of the continuous-time process.

\begin{figure}
\begin{center}
 \includegraphics[width=\figurewidth]{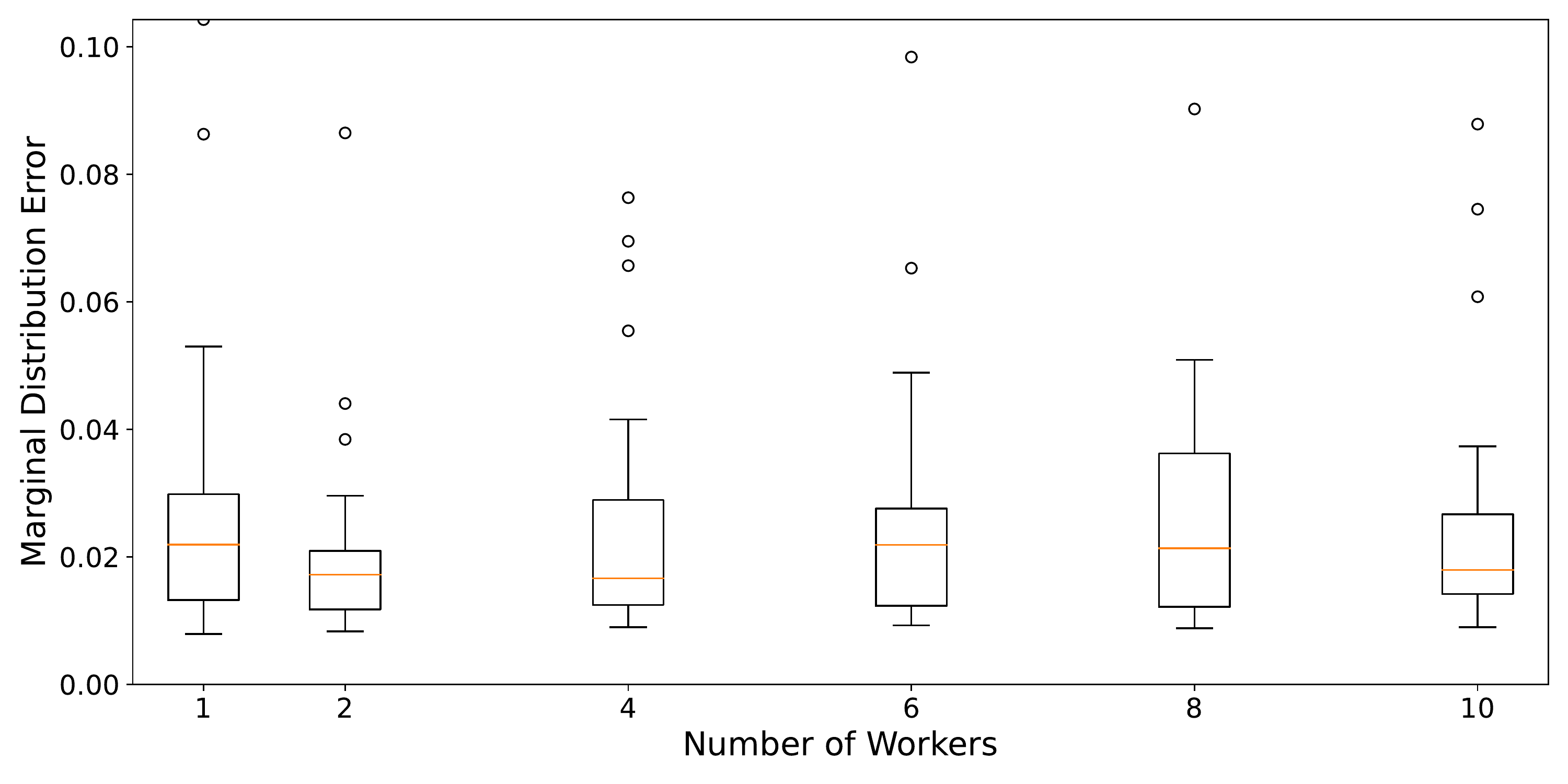}
 \caption{Marginal Wasserstein distance against a reference HMC sample for the logistic regression model using Federated ZigZag with dynamic prior re-distribution.} 
 \label{fig:perf_lr_error_dist}
\end{center}
\end{figure}

\section{Further Numerical Experiments}

\subsection{Log-Gaussian Cox Model}

To demonstrate a more complex example of federation across multiple workers, we consider a Log-Gaussian Cox model, similar to the one considered in \cite{galbraith2016event} and \cite{wu2020coordinate} for the Zig-Zag and Coordinate samplers, respectively.  We assume that the observations ${y} = \lbrace y_{ij}\rbrace_{i, j=1}^d$ are  Poisson distributed and conditionally independent given the intensity ${\lambda} = \lbrace \lambda_{ij}\rbrace_{i,j=1}^d$ where $\lambda_{ij} = \exp(x_{ij})$.  The latent process ${x} = \lbrace x_{ij}\rbrace_{i,j=1}^d$ is a Gaussian 
process defined on the $d\times d$ grid with mean zero and precision matrix $P_{uv} = \beta(\delta_{u,v} - \alpha A_{u, v})$, where ${A}=(A_{u, v})$ is the adjacency matrix of the grid and where $u=(i,j)$ and $v=(i', j')$ are grid coordinates of the nodes.     The posterior distribution for ${x}$ given the observations ${y}$ is given by
$$
\pi({x}) = \mathbb{P}({x} \, | \, {y}, \alpha, \beta) \propto \exp\left[\sum_{i,j=1}^d \left(y_{ij}x_{ij} - \exp(x_{ij})\right) - \frac{\beta}{2}{x}^\top ({I} - \alpha {A}) {x} \right]
$$
We assume that the nodes are distributed spatially across $M=4$ workers, see Figure \ref{fig:cox1}.  Let $V_1, \ldots, V_M$ be the nodes assigned to each worker.  The $m^{th}$ worker will target local potential:
$$
    U_m({x}) = \sum_{(i,j) \in V_m} \exp(x_{ij}) - y_{ij}x_{ij}, \quad m=1,\ldots, M.
$$
The prior term is handled by the central server which has a potential of the form
$$
    U_0({x}) = \frac{\beta}{2}{x}^\top ({I} - \alpha {A}) {x}.
$$
Note that the $m$-th worker only needs to observe the latent variables $x_{ij}$ for $(i,j) \in V_m$ which are relevant to its observations.  The interactions between the latent variables through the Gaussian process prior are handled entirely the central server.   We choose $\alpha = 0.1$ and $\beta = 1$.  For clarity of presentation we demonstrate the scheme for $d = 4$, i.e. a grid with $16$ nodes, noting that our experiments run fine on much larger grids.   We run the Federated ZigZag sampler for the $4$ workers until time $T=100.0$.  We extract samples from the  continuous time process by extracting samples from the resulting trace at $10^{-3}$ time-steps obtaining approximately $10^5$ samples. In Figure \ref{fig:cox_results} we plot the marginal posteriors for the output of the Federated ZigZag process, compared against a reference sample generated by running HMC on the same target distribution.    We see excellent agreement between the two sets of distributions. 
\begin{figure}
\begin{center}
 \includegraphics[width=.8 \figurewidth]{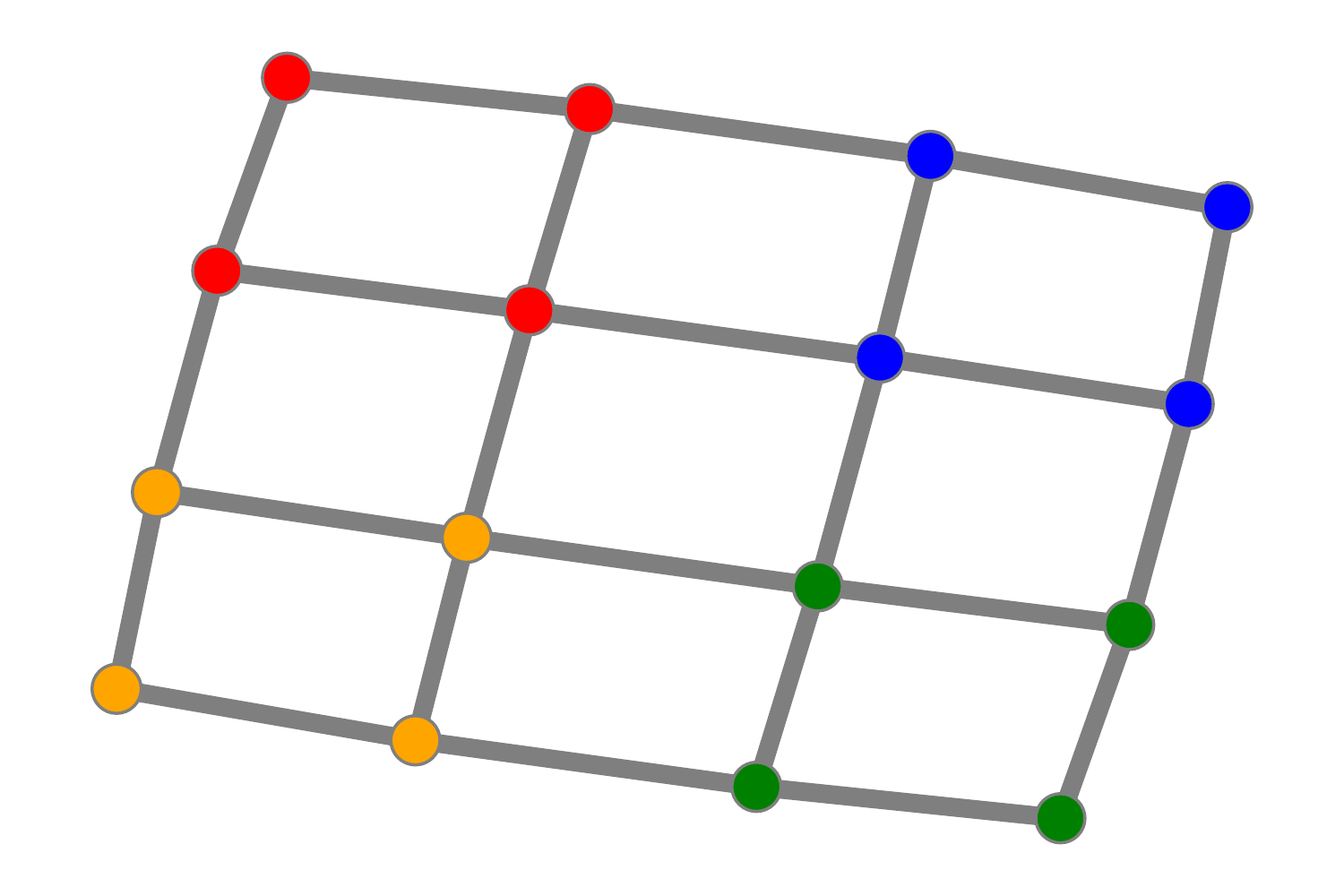}
 \caption{Graph underlying the Log-Gaussian Cox model, where the nodes are partitioned across the $4$ different workers (distinguished by node colour).}
 \label{fig:cox1}
\end{center}
\end{figure}

\begin{figure}
\begin{center}
 \includegraphics[width=.8 \figurewidth]{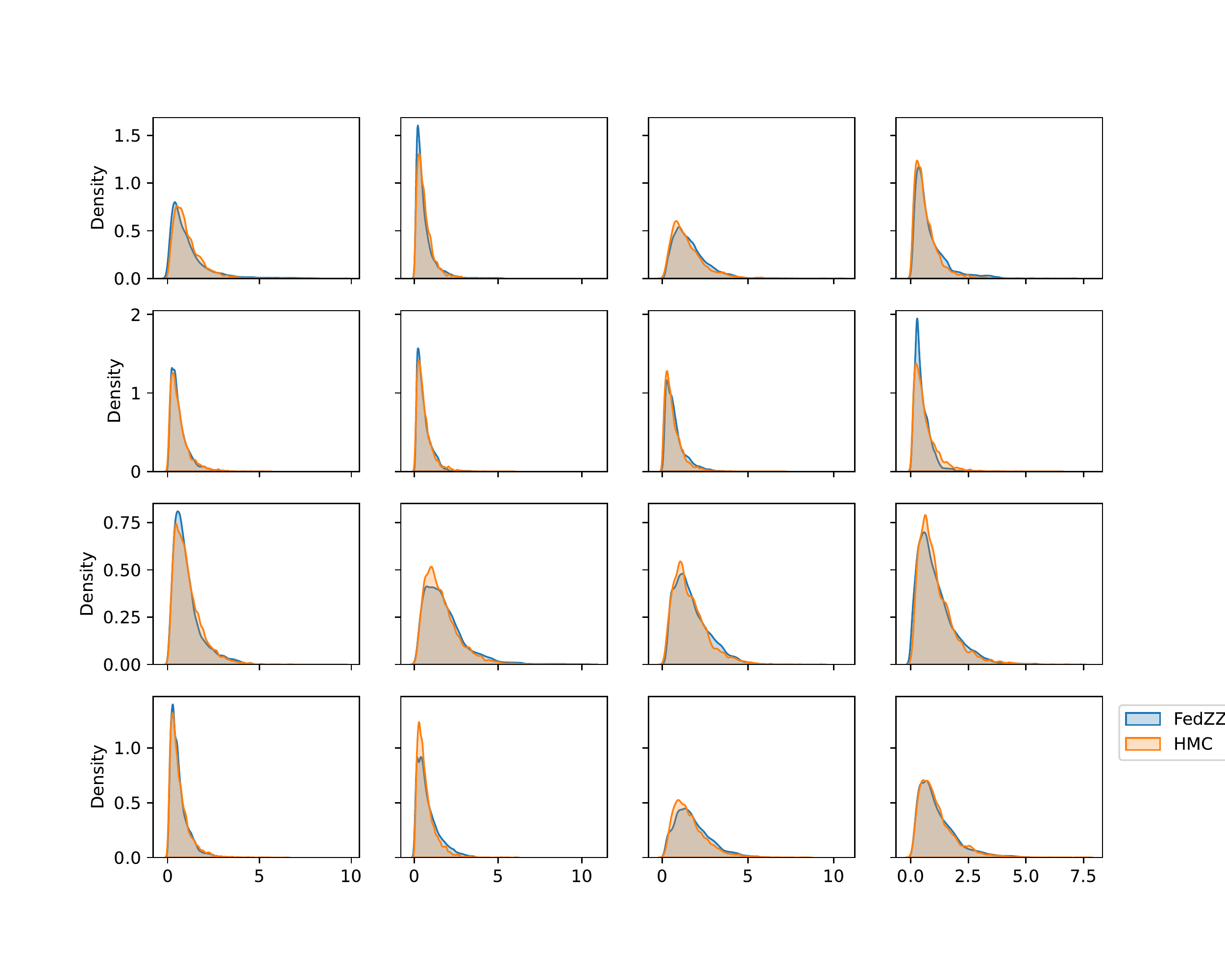}
 \caption{Marginal posteriors for the Log-Gaussian Cox model, Federated ZigZag sampler compared to a HMC reference sample.}
 \label{fig:cox_results}
\end{center}
\end{figure}

\section{Specification of Example Models}
\subsection{Multivariate Gaussian Distribution}

Consider a multivariate Gaussian distribution $\mathcal{N}(\mu, \Sigma)$ where $\mu \in \mathbb{R}^d$ and $\Sigma$ is a symmetric, positive definite $d\times d$ covariance matrix.   Then the associated potential is given by
$$
    U(x) = \frac{1}{2} (x-\mu)^\top P  (x-\mu), \quad x \in \mathbb{R}^d,
$$
where $P = \Sigma^{-1}$, and the intensity functions for the Zig-Zag process take the form
\begin{align*}
    \lambda_k(x, v) = \max\left(0, v_k \sum_{i=1}^d P_{ki}(x_i-\mu_i)\right),
\end{align*}
so that
\begin{align*}
    \lambda_k(x+tv, v) &= \max\left(0, v_k \sum_{i=1}^d P_{ki}(x_i + t v_i -\mu_i)\right)\\
    &=\max\left(0,  \sum_{i=1}^d v_k P_{ki}(x_i - \mu_i)+ t \sum_{i=1}^d v_k P_{ki}v_i)\right).
\end{align*}
Given that this intensity is of the form $\max(0, b + at)$ the distribution of the next event can be sampled directly by computing the associated inverse function $H(u; x, v)$ in \eqref{eq:H} exactly,   
so that for $V \in U[0,1]$, the random variable $H(-\log V; x, v)$ will be distributed according to the first jump time of the Poisson process with rate function $\max(0, b + at)$.  
\subsection{Logistic Regression}

\label{sec:logistic-regression}

We assume that worker $m$, with $m \in \{1, \dots, M\}$, has access to independent observations $(\xi^{(m)}_i, \eta^{(m)}_i)_{i=1}^{n_m} \subset \R^d \times \{0,1\}$ from the joint model 
\[ \P(\eta_i^{(m)} = 1 \mid x) = \frac 1 { 1+ \exp \left(- x^T \xi_i^{(m)}\right)}, \quad i = 1, \dots, n_m, \quad m = 1, \dots, M.\]
All observations are mutually independent, also between different workers.
Here $x \in \R^d$ is the parameter which we wish to infer. We assume a prior distribution $\pi_0(x) \propto \exp(-U_0(x))$ over the unknown parameter $x$.

The posterior distribution is then specified as
\begin{align} \pi(x) \propto  \pi_0(x) \prod_{m=1}^M \prod_{i=1}^{n_m} \frac{\exp \left( \eta_i^{(m)}  x^T \xi_i^{(m)} \right)}{1 + \exp \left(   x^T \xi_i^{(m)} \right)}.\end{align}
It can be written as $\pi(x) \propto \exp(-U(x))$, where
\begin{align*}
    U(x) & = U_0(x) + \sum_{m=1}^M \sum_{i=1}^{n_m}  \left\{ \log \left[ 1 + \exp \left(   x^T \xi_i^{(m)} \right) \right]  - \eta_i^{(m)}  x^T \xi_i^{(m)} \right\} = \sum_{m=1}^M U_m(x),
\end{align*}
where
\[ U_m(x) =  \frac {n_m}{N} U_0(x) + \sum_{i=1}^{n_m} \left\{ \log \left[ 1 + \exp \left(   x^T \xi_i^{(m)} \right) \right]  - \eta_i^{(m)}  x^T \xi_i^{(m)} \right\} 
\]
with $N = \sum_{m=1}^M n_m$.

The gradients of $U_m$ are given by
\[ \nabla U_m(x) = \frac{n_m}{N} \nabla U_0(x) + \sum_{i=1}^{n_m} \xi_i^{(m)}  \frac{1-\eta_i^{(m)}(1 + \exp(-x^T \xi_i^{(m)}))}{1 + \exp(-x^T \xi_i^{(m)})}\]
and the associated Hessians
\begin{align*}  \nabla^2 U_m(x) 
& = \frac{n_m}{N} \nabla^2 U_0(x) + \sum_{i=1}^{n_m}  \frac {\xi_i^{(m)}(\xi_i^{(m)})^T}{4 \cosh(x^T \xi_i^{(m)})^2}.
\end{align*}
In particular, the Hessian of $U_m$ is bounded for each $m$  provided that the Hessian of $U_0$ is bounded.

\subsection{Time Series Model}

Write  $h(z) = -\log g(z) = \left( \frac{\nu+1}{2} \right) \log \left( 1 + \frac {z^2}{\nu} \right)$. We have 
\[\pi(x,c) \propto \exp \left(-\sum_{i=1}^N \sum_{k=1}^K U_{i,k}(x,c) \right) \quad \text{with} \quad 
U_{i,k}(x,c) = h(x y_{k-1}^{(i)} + c - y_k^{(i)}).\]
We have
\begin{align*}
h'(z) & = \frac{(\nu+1)z}{\nu + z^2}, \quad h''(z) = \frac{(\nu+1)(\nu - z^2)}{(\nu+z^2)^2},
\end{align*}
which admit uniform bounds
\begin{align*}
|h'(z)| & \le  \frac{(\nu+1)\sqrt{\nu}}{2 \nu^2},\\
|h''(z)| & \le \frac{\nu+1}{\nu}.
\end{align*}
By the chain rule,
\begin{align*} \nabla U_{i,k}(x,c) & =  h'(x y_{k-1}^{(i)} + c - y_k^{(i)})
 \begin{pmatrix} y_{k-1}^{(i)} \\ 1 \end{pmatrix},
\end{align*}
and
\begin{align*}
\nabla^2 U_{i,k}(x,c) & = h''(x y_{k-1}^{(i)} + c - y_k^{(i)}) \begin{pmatrix} (y_{k-1}^{(i)})^2 & y_{k-1}^{(i)} \\ y_{k-1}^{(i)} & 1 \end{pmatrix}
\end{align*}
We may therefore obtain a bound on the Hessian norm as
\[ \sup_{x,c}\| \nabla^2 U(x,c) \| \le \frac{\nu+1}{\nu} \sum_{i=1}^N \sum_{k=1}^K (1 + y_{k-1}^{(i)})^2. \]

\subsection{Log Gaussian Cox Model}

The potential function for a single worker is given by 
$$
U_m(x) = \sum_{(i,j) \in V_m} (-y_{ij}x_{ij} + \exp(x_{ij})),
$$
so that
$$
    \nabla_{x_{ij}} U_m(x) = -y_{ij} + \exp(x_{ij}),  \mbox{ for } (i,j) \in V_m.
$$
For simplicity we flatten the index, so that $k = d(i-1) + j$, and re-index $x_k = x_{(i,j)}$, so that
$$
    \nabla_{x_{k}}U_m(x) = -y_k + \exp(x_k).
$$
The associated switching intensity is given by
$$
\lambda_k(x + tv, v) = \left(v_k \nabla_k U_m(x + tv)\right)_{+}.
$$
  We can bound this above as follows:
\begin{align*}
\lambda_k(x + tv, v) &= \left( -y_k v_k + v_k \exp(x_k + t v_k) \right)_{+} \\
&\leq \left( -y_k v_k  \right)_{+} + \left( v_k \exp(x_k + t v_k) \right)_{+}
\end{align*}

The last upper bound provides a means of exactly simulating a dominating inhomogeneous Poisson process which can be subsequently thinned to simulate the local worker Zig-Zag process for the Log-Gaussian Cox model.   Indeed, we can perform the following steps to simulate the next switching local Zig-Zag sampler event from $(x,v)$.  
\begin{enumerate}
\item Simulate the next event of the Poisson process with intensity  $\max(0, -y_k v_k)$,  i.e. let $u \sim U[0,1]$, then set 
    $$
        \tau_k^{(1)} = \begin{cases} 
                        \frac{\log(u)}{y_k v_k} & \mbox{if } y_k v_k < 0 \\
                        \infty &\mbox{ otherwise.}
                \end{cases}
    $$
\item Simulate the next event of the inhomogenous Poisson process with intensity $\max(0, v_k \exp(x_k + t v_k))$, i.e. let $u' \sim U[0,1]$, then set 
$$
    \tau_k^{(2)} = \begin{cases}
                \log(\exp(x) - \log(u')) - x & \mbox{ if } v_k > 0 \\
                \infty & \mbox{ otherwise.}
             \end{cases}
$$
\item Set $\tau_k = \min(\tau_k^{(1)}, \tau_k^{(2)})$, for $k=1,\ldots, $ 
\item Let $m_0 = \arg\min \tau_k$.
\item Set $x = x + \tau_{m_0} v$.
\item Let $T_1 = -y_l v_l$ and $T_2 = v_k\exp(x_l)$. With probability $\frac{\max(0, T_1 + T_2)}{\max(0, T_1) + \max(0, T_2)}$ communicate a switch to the central server, otherwise, return to step $1$.
\end{enumerate}

\end{document}